\newtheorem{example}{Example}%[section]
\newtheorem{theorem}{Theorem}
\newtheorem{remark}{Remark on Notation}
\newtheorem{case}{Inductive Case}
\newtheorem{base}{Base Case}
\newtheorem{definition}{Definition}
\newcommand{\tnorm}[2]{#2 \downarrow_{#1}}
\newcommand{\pnorm}[1]{\tnorm{\mplus}{#1}}
\newcommand{\denote}[1]{\llbracket #1 \rrbracket}
\newcommand{\st}{<:}
\newcommand{\strans}{\leadsto}
\mathchardef\mplus="2B
\newcommand{\highlightg}[1]{%
\colorbox{green!30}{$\displaystyle#1$}}
\def\minus{-}
  \providecommand\BibTeX{{%
    \normalfont B\kern-0.5em{\scshape i\kern-0.25em b}\kern-0.8em\TeX}}}
\begin{document}

%%
%% The "title" command has an optional parameter,
%% allowing the author to define a "short title" to be used in page headers.
\title{Proof-Carrying Plans: a Resource Logic for AI Planning}

%%
%% The "author" command and its associated commands are used to define
%% the authors and their affiliations.
%% Of note is the shared affiliation of the first two authors, and the
%% "authornote" and "authornotemark" commands
%% used to denote shared contribution to the research.

\author{Alasdair Hill}
\affiliation{%
  \institution{Heriot-Watt University}
  \city{Edinburgh}
  \country{Scotland}}
\email{ath7@hw.ac.uk}

\author{Ekaterina Komendantskaya}
\affiliation{%
    \institution{Heriot-Watt University}
    \city{Edinburgh}
    \country{Scotland}
}
\email{ek19@hw.ac.uk}

\author{Ronald P. A. Petrick }
\affiliation{%
    \institution{Heriot-Watt University}
    \city{Edinburgh}
    \country{Scotland}
}
\email{R.Petrick@hw.ac.uk}

%%
%% By default, the full list of authors will be used in the page
%% headers. Often, this list is too long, and will overlap
%% other information printed in the page headers. This command allows
%% the author to define a more concise list
%% of authors' names for this purpose.
%\renewcommand{\shortauthors}{Trovato and Tobin, et al.}

\begin{abstract}Planning languages have been used successfully in AI for several decades.
  Recent trends in AI verification and Explainable AI have raised the question of whether AI planning techniques can be verified.
  In this paper, we present a novel resource logic, the \emph{Proof Carrying Plans (PCP) logic} that can be used to verify plans produced by AI planners.
  The PCP logic takes inspiration from existing resource logics (such as Linear logic and Separation logic)
  as well as Hoare logic
  when it comes to modelling states and resource-aware plan execution.
  It also capitalises on the Curry-Howard approach to logics, in its treatment of plans as functions and plan pre- and post-conditions as types.
  This paper presents two main results. From the theoretical perspective,
  we show that the PCP logic is sound relative to the standard possible world semantics used in AI planning. From the practical perspective,
  we present a complete Agda formalisation of the PCP logic and of its soundness proof. Moreover, we showcase the Curry-Howard, or functional, value of this implementation
  by supplementing it with the library that parses AI plans into Agda's proofs automatically. We
  provide evaluation of this library and the resulting Agda functions.\\
  \textbf{Keywords:} AI planning, Verification, Resource Logics, Theorem Proving, Dependent Types.
\end{abstract}

\begin{CCSXML}
<ccs2012>
   <concept>
       <concept_id>10003752.10010124.10010131.10010136</concept_id>
       <concept_desc>Theory of computation~Action semantics</concept_desc>
       <concept_significance>500</concept_significance>
       </concept>
   <concept>
       <concept_id>10003752.10010124.10010131.10010134</concept_id>
       <concept_desc>Theory of computation~Operational semantics</concept_desc>
       <concept_significance>500</concept_significance>
       </concept>
   <concept>
       <concept_id>10010147.10010178.10010199.10010200</concept_id>
       <concept_desc>Computing methodologies~Planning for deterministic actions</concept_desc>
       <concept_significance>500</concept_significance>
       </concept>
   <concept>
       <concept_id>10011007.10011074.10011099.10011692</concept_id>
       <concept_desc>Software and its engineering~Formal software verification</concept_desc>
       <concept_significance>500</concept_significance>
       </concept>
   <concept>
       <concept_id>10003752.10003790.10002990</concept_id>
       <concept_desc>Theory of computation~Logic and verification</concept_desc>
       <concept_significance>500</concept_significance>
       </concept>
 </ccs2012>
\end{CCSXML}

\ccsdesc[500]{Theory of computation~Action semantics}
\ccsdesc[500]{Theory of computation~Operational semantics}
\ccsdesc[500]{Computing methodologies~Planning for deterministic actions}
\ccsdesc[500]{Software and its engineering~Formal software verification}
\ccsdesc[500]{Theory of computation~Logic and verification}

%%
%% The abstract is a short summary of the work to be presented in the
%% article.
%%
%% This command processes the author and affiliation and title
%% information and builds the first part of the formatted document.
\maketitle

\section{Motivation}
Planning is a research area within AI that studies automated generation of plans from
symbolic domain and problem specifications.
AI planners came into existence in the 1970s as an intersection between general problem solvers \cite{ernst1969gps},
situation calculus \cite{mccarthy1981some} and theorem proving \cite{green1969theorem}.
One of the most popular early planners was the Stanford Research Institute Problem Solver (STRIPS) \cite{strips}
which was created to address the problems faced by a robot in rearranging objects and in navigating.

In STRIPS, a planner is given a description of  an \emph{initial state} (of the ``world'')
and a \emph{goal state}.  For example, Figure~\ref{fig:pddl-blocksworld2} defines the initial state that has blocks a and b on the table,
and the goal state -- the blocks  assembled in a stack.
A solution to a planning
problem is a sequence of actions, which is simply referred to as a \emph{plan}.
For example, a solution to the planning problem of  Figure~\ref{fig:pddl-blocksworld2} is the following plan: \emph{pickup a  from the table}, then
\emph{putdown a on b}.

 \begin{figure}[t!]
    \centering
       \begin{verbatim}
    (define (problem blocksworld)
      (:domain blocksworld)
      (:objects a b)
      (:init (onTable a)
             (onTable b)
             (clear a)
             (clear b)
             (handEmpty))
      (:goal (and (on a b) (onTable b))))
    \end{verbatim}
    \caption{\emph{``BlocksWorld'' Planning Problem Description. Initial state: two blocks, a and b, are lying ``clear'' (i.e. unobstructed) on a table, and a robot hand is empty. Goal state: block a is on block b.
        }}
    \label{fig:pddl-blocksworld2}
  \end{figure}

Many versions of planning languages were proposed, and
 the \emph{Planning Domain and Definition Language (PDDL)}~\cite{mcdermott1998pddl} aimed to standardise them.
One notable design decision of PDDL is the splitting of the planning problem into
\emph{domain} and \emph{problem descriptions}. The domain description describes generally predicates and admissible actions (as shown in  Figure~\ref{fig:pddl-blocksworld}), while
the problem description defines specific initial and goal states (as shown in  Figure~\ref{fig:pddl-blocksworld2}).

PDDL has many extensions over regular STRIPS syntax
with the latest version supporting types, numerical functions, equality,
conditionals, concurrency, temporal planning and more.
Among applications are: reasoning about knowledge, belief and causality,
planning allocation of resources, modelling perception of the real world,
program synthesis and implementations of multi-agent systems \cite{hendler1990ai,wilkins2014practical}.
Many of these applications of planning are used in real-world
environments where the verification of plan correctness is essential for
successful and safe operation.

Verification and validation of AI planning languages \cite{bensalem2014verification}
is a rich field of research.
One may verify  domain models, planning algorithms, or the produced plans.

Verifying domain models \cite{penix1998using,long2009planning} seeks to validate
whether  domain descriptions accurately capture (expert) knowledge about the world.
This can be done by performing test based verification of input and
output specifications to check the domain performs as expected.
Alternatively, some approaches ensure properties
that should hold across many domains such as enforcing that the actions cannot
lead to an inconsistent state.

\begin{figure}[t!]
    \centering
        \begin{verbatim}
        (define (domain blocksworld)
            (:requirements :strips :equality)
            (:predicates
                (handEmpty)
                (holding ?x)
                (onTable ?x)
                (on ?x ?y)
                (clear ?x))

           (:action pickup_from_table
              :parameters
                (?x)
              :precondition
                (and (handEmpty)
                     (onTable ?x)
                     (clear ?x))
              :effect
                (and (not (handEmpty))
                     (not (onTable ?x))
                     (holding ?x)))

           (:action putdown_on_stack
              :parameters
                 (?x ?y)
              :precondition
                 (and
                    (not (= ?x ?y))
                    (holding ?x)
                    (clear ?y))
              :effect
                 (and
                    (not (holding ?x))
                    (not (clear ?y))
                    (on ?x ?y)
                    (handEmpty))))
        \end{verbatim}
      \caption{\emph{A fragment of PDDL ``BlocksWorld'' Domain.}}
    \label{fig:pddl-blocksworld}
  \end{figure}

Formalisation of planning algorithms \cite{abdulaziz2019verified}
has shown that even well understood algorithms can produce incorrect plans.
Modern AI planners are complex software artefacts,
and the existing attempts \cite{rizaldi2018formally,penberthy1992ucpop} to verify them only focused on certain aspects of their implementation.
Due to the complexity of  planning problems, many planners will opt for implementations
where efficiency is the primary concern which can further complicate the ability
to formally verify these algorithms. No mainstream planner has been fully verified yet.

AI plan verification seeks to verify plans produced by planners against some
domain model. These tools check properties such as precondition satisfaction,
termination and goal satisfaction to ensure that a plan is valid. For example, PDDL
has a validator \cite{howey2004val} that  performs these checks
and suggests repairs. This more practical and   \emph{lightweight} approach to verification is broadly in line with other lightweight verification trends in the literature~\cite{FLR17}. However, at the same time, it is rather disjoint from
the growing body of research into  \emph{type-based verification}~\cite{DBLP:reference/crypt/Necula11,leroy2012compcert,nanevski2013dependent} % add Necula97 back?
or \emph{resource logics}~\cite{pym2019resource,calcagno2015moving} that offer more principled, formal and rigorous approaches, as well as richer languages for expressing the verification properties.

In this paper, we are taking an attempt at bridging this gap between the AI planning and the programming language community. We introduce a new formal system inspired
by resource semantics \cite{o2007resources,pym2019resource}, and by the Curry-Howard view on Separation logic
as given e.g. in \cite{N16,polikarpova2019structuring}.
We call the resulting formalism the \emph{proof-carrying plan logic} (or PCP logic for short).
It features:  Hoare triples to describe plans and states;
 the frame rule for local resource-aware reasoning;
 and the  Curry-Howard view on states as types, and state transformations as functions.
The latter feature ensures that plans that we verify in our logic are also executable functions -- which completes
the analogy with the ``proof-carrying code'' research agenda~\cite{DBLP:reference/crypt/Necula11}.

This approach has several advantages over the existing plan verifiers. Firstly, the clear and intuitive formal semantics helps to clarify
the computational properties of AI plans. For example, conditions are embedded into our
rules that ensure the desired property of state consistency is inherent in the logic;
and structural rules of the PCP logic help to clarify the role of constraints in PDDL
and expose some latent properties of AI plans (see Section~\ref{sec:implementation}).
As a result, we were able to prove soundness of  the PCP logic relative to the possible world
semantics as used in the AI planning community, and fully formalise both the logic and the
proof in Agda~\cite{ATH20}. This sets up new standards of rigour for AI plan verification that
is not present in existing planners.

Secondly, some benefits arise as a consequence of adopting a higher level of abstraction.
For example, the existing AI planning verification approaches split into methods for domain and plan verification.
This is potentially harmful, as verifying just one aspect still leaves a gap for bugs and errors.
The PCP logic does not separate  the problem of state consistency and validity of plan execution.
We envisage that the right level of abstraction will enable further extensions to incorporate concurrency and more sophisticated constraints on the states and the plans.

Finally, benefiting from the Curry-Howard approach, our Agda code can  be extracted as verified
executable Haskell or byte code.
We will illustrate all of these concepts by means of an example.

\subsection{Results of this paper by means of an example}\label{sec:ex}

Figures~\ref{fig:pddl-blocksworld2}, \ref{fig:pddl-blocksworld} show the original PDDL syntax for a planning domain and a planning problem.
PDDL will be able to automatically find a plan that satisfies pre- and post-conditions shown in Figure~\ref{fig:pddl-blocksworld2}. In particular, it will find a plan
$$\mathit{f}_{\mathit{ab}} = ((\mathit{pickup\_from\_table\  a})\ ; \ (\mathit{putdown\_on\_stack \  a\  b})).$$  Our goal is to formulate a proof system in which we can (semi-) automatically prove correctness of this plan, given the PDDL domain description, the initial and the goal states.

Looking closer at the domain definition in Figure~\ref{fig:pddl-blocksworld}, we see it declares first-order predicates, as well as actions that operate on pre- and post-conditions. Ignoring temporarily the internal structure of pre- and post-conditions, we can see that
the formalism lends itself naturally to the syntax of Hoare triples~\cite{hoare1969axiomatic}:
$ \{ \mathit{Pre} \} \strans \{ \mathit{Post} \} \ | \  \mathit{action}$. This is our first key intuition.
Somewhat differently from Hoare logic, we see that the domain definition defines
a set of axioms that control actions. So, we will be talking about a certain plan or action possible relative to a domain $\Gamma$. Thus, we will in fact be working with judgements of the form:
$$ \Gamma \vdash \{ \mathit{Pre} \} \strans \{ \mathit{Post} \} \ | \ \mathit{action}$$

Let us now look closer at the structure of the pre- and post-conditions. The domain specification (Figure~\ref{fig:pddl-blocksworld}) states them as formulae of first-order logic (with negation and conjunction), but the problem definition (Figure~\ref{fig:pddl-blocksworld2}) uses lists of atomic propositional formulae to describe initial and goal states. A simple way to resolve this mismatch is to define $\mathit{Pre}$ and $\mathit{Post}$ to be states in which each individual atomic formula is mapped to $+$ or $-$, depending on whether it is considered to be true or false in the state.
For example, $\{ \mathit{onTable}\ \mathit{a}  \mapsto + \} $ is a state with one formula map.
This allows us to formalise the notions of negation, state and (later) action on states.

Of course, we must not allow inconsistent states where a formula is mapped to $+$ and $-$ simultaneously.
To ensure this we introduce a notion of a \emph{valid} state where a state
is valid if it is consistent, i.e. contains no conflicting formula maps.
For example $\{ (\mathit{onTable\ a} \mapsto +) \  * \  (\mathit{onTable\ b} \mapsto +) \} $
is a valid state but  $\{ (\mathit{onTable\ a} \mapsto +) \  * \  (\mathit{onTable\ a} \mapsto -) \} $ is not.

States are not necessarily propositional and in particular
Figure~\ref{fig:pddl-blocksworld} implies first-order language in the domain definitions.
Therefore, we will assume that all  $ \{ \mathit{Pre} \} \strans \{ \mathit{Post} \} \ | \  \mathit{action}$ in $\Gamma$ are implicitly universally quantified as follows:
$\forall \overline{x}.  \{ \mathit{Pre} (\overline{x}) \} \strans \{ \mathit{Post}(\overline{x}) \} \ | \  \mathit{action}(\overline{x}) $.

One final caveat exists. If we look closer at the domain description of  Figure~\ref{fig:pddl-blocksworld}, we will notice that it uses an inequality constraint $x \neq y$ that is not
declared as a domain predicate. We will follow the resource logics tradition~\cite{berdine2005symbolic,polikarpova2019structuring} and separate state descriptions from constraints on states.
That is, we further refine domain descriptions to have the syntax $\phi ; \{ \mathit{Pre} \} \strans \{ \mathit{Post} \} \ | \  \mathit{action}$, where $\phi$ defines constraints on states. For technical reasons, we
formalise $\phi$ to be a list (rather than a conjunction) of constraints, and we use equality ($=$) and inequality ($\neq$) constraints instead of using negation explicitly.
Figure~\ref{fig:context} defines the context $\Gamma_{\mathit{BW}}$ that matches PDDL domain description of Figure~\ref{fig:pddl-blocksworld} in this new language. %We omit quantifiers.

\begin{figure}[t]
    \begin{tabular}{l}

    $[] ;

    \begin{Bmatrix}
    \mathit{handempty} \mapsto +  \\
    *\ \mathit{onTable} \ x \mapsto +  \\
    *\ \mathit{clear} \ x \mapsto +  \\
    \end{Bmatrix}

    \strans

    \begin{Bmatrix}
    \mathit{handEmpty} \mapsto - \\
    * \ \mathit{onTable} \ x\mapsto - \\
    * \ \mathit{holding} \ x \mapsto + \\
    * \ \mathit{clear} \ x \mapsto +
    \end{Bmatrix}

    \ | \ \alpha_1 \ x $ \\

    where $\alpha_1 \equiv \mathit{pickup\_from\_table}$

    \\

    $[x \neq y];
    \begin{Bmatrix}
    \mathit{holding} \ x \mapsto + \\
    * \ \mathit{clear} \ y \mapsto +
    \end{Bmatrix}

    \strans

    \begin{Bmatrix}
      \mathit{holding} \ x \mapsto - \\
      * \ \mathit{clear} \ y \mapsto - \\
      * \ \mathit{on} \ x \ y \mapsto + \\
      * \ \mathit{handEmpty} \mapsto +
    \end{Bmatrix}

    \ | \ \alpha_2 \ x \ y $ \\

    where $\alpha_2 \equiv \mathit{putdown\_on\_stack} $

    \\
    \end{tabular}

    \caption{Context $\Gamma_{\mathit{BW}}$ that defines BlocksWorld PDDL domain of Figure~\ref{fig:pddl-blocksworld}.}
    \label{fig:context}
\end{figure}

It now remains to formulate the rules for the PCP logic. They are very simple: we need a rule ``ApplyAction'' to be able to choose specific action definitions from the context,  we need a rule that composes the actions, similar to the composition rule of Hoare logic~\cite{hoare1969axiomatic},
and we need a frame rule \cite{hayes1981frame,dennett2006cognitive} %\anote{changed to ai citations}
to have local reasoning on states.
Additionally, the system has two structural rules, weakening and shrink.
We show that the rules are sound relative to the possible world semantics of PDDL, and we formalise the PCP logic and the soundness proof in Agda~\cite{ATH20}.

To make use of this Agda library, we can compile the domain and  problem definitions from PDDL directly to Agda code. We then can prove in Agda correctness of the PDDL plans.
For example, we can prove that, given $P_{\mathit{ab}} \equiv \{(\mathit{onTable\ a} \mapsto +) \  * \  (\mathit{onTable\ b} \mapsto +) \ * \ (\mathit{clear \ a} \mapsto +) \ * \ (\mathit{clear \ b} \mapsto +) \ * \ (\mathit{handEmpty} \mapsto +) \}$ and $Q_{\mathit{ab}} \equiv \{(\mathit{on \ a \ b}  \mapsto +) \ * \ (\mathit{on \ Table \ b}  \mapsto +)  \}$ as in Figure~\ref{fig:pddl-blocksworld},
$\Gamma_{\mathit{BW}} \vdash  P_{\mathit{ab}} \strans Q_{\mathit{ab}}   | \ f_{\mathit{ab}}$, i.e. we can certify that
the plan $f_{\mathit{ab}}$ is indeed valid.

Finally, we can take advantage of the Curry-Howard interpretation of $\Gamma_{\mathit{BW}} \vdash  P_{\mathit{ab}} \strans Q_{\mathit{ab}}   | \ f_{\mathit{ab}}$, as ``function $f_{\mathit{ab}}$ has type $P_{\mathit{ab}} \strans Q_{\mathit{ab}}$'' and actually execute $f_{\mathit{ab}}$ as a function. We define an \emph{action handler}, an auxiliary function that executes plans on states.
It will apply the plan $f_{\mathit{ab}}$ to the initial state $P_{\mathit{ab}}$ to obtain the goal state $Q_{\mathit{ab}}$ as a function output.
Moreover, we can extract this code to Haskell or binary files, the latter can be deployed directly on robots, with the advantage of carrying the correctness proof! We show the extracted code for this example and several additional examples in~\cite{ATH20}.

\subsection{The Paper Structure}

The paper proceeds as follows. Section~\ref{grammar}  introduces the PCP logic, proving formally some basic results concerning the ordering and basic operations on states.
Section~\ref{sec:sound} establishes the soundness of the PCP logic and also defines the notion of action handler.
Section~\ref{sec:implementation} describes the implementation~\cite{ATH20}, evaluates it on several benchmark PDDL domains,
and discusses the practical value of using dependent types for implementation of verified plans.
Section~\ref{sec:RW} concludes, and discusses related and future work.

\section{The PCP Logic}\label{grammar}

This section defines the syntax, ordering (subtyping) relation on states,
and the rules of the PCP logic.

\subsection{Syntax of the PCP logic}
We define the PCP syntax in Figure~\ref{fig:grammarsep}.

\textbf{First-order formulas and constraints.}
Let
  $R$ be a set of  predicate symbols $\{R,R_1,R_2,...\}$ with arities,
   $X$ be a set of variables $\{x,x_1,x_2,...\}$,  and
   $C$ be a set of constants $\{c,c_1,c_2,...\}$. Figure~\ref{fig:grammarsep} defines a term as either a variable or a constant.
   An \emph{atomic formula} (or \emph{Atom}) is given by a predicate applied to a finite list of terms.
For example, the atomic formula $\mathit{onTable \ a}$ consists of the predicate
$\mathit{onTable} $ applied to a constant $\mathit{a}$.
This defines the pure first-order part of our logic.
We also distinguish two specific kinds of atomic formulae that feature equality and inequality as predicate symbols, we call these \emph{Constraints}.

We will use abbreviation  $\overline{x}$ to denote a finite list $\{ x_1 , ... \ x_n \}$ of arbitrary length. We will write $R(\overline{x})$ if $R$ contains variables $\overline{x}$.
A substitution is a partial map from $X$ to $C$, and we will use symbols $\{\sigma,\sigma_1,\sigma_2,...\}$
to denote ground substitutions.
Given an atomic formula $R(\overline{x})$ we write $R(\overline{x})[x_i \backslash c_i]$ when we substitute each occurrence
of a variable $x_{i}$ in $\overline{x}$ by a constant $c_{i}$. We say the resulting formula is \emph{ground}, i.e. it contains no variables.

\begin{figure*}[t]
  \centering
  \begin{equation*}
  \begin{aligned}
    \text{Term}
        && \mathit{Term} & \ni t, t_1, ... \ t_n    && :\!:= x \ | \ c
       \\
    \text{Atomic Formulae}
        && \mathit{Atom} & \ni A    && :\!:= R \ (t_1, ... \ t_n)
    \\
    \text{Constraint}
        && \mathit{Constraint} & \ni e  && :\!:=  t = t_1 \ | \ t \neq t_1 \
    \\
    \text{Constraint List}
        && \mathit{CList} & \ni \phi, \psi    && :\!:=  \ [] \ | \  e :: \phi %[e ; \phi]
 \\
    \text{Actions}
    && \mathit{Act} & \ni a    && :\!:= \alpha \ (t_1, ... \ t_n)
     \\
    \text{Plan}
        && \mathit{Plan} & \ni f, f_1, f_2 && :\!:= \mathit{shrink} \ | \ a \ | \ f ; f_1
        \\
    \text{Polarities}
        && \mathit{Polarity} & \ni z && :\!: = \ + \ |\  -
    \\
    \text{State}
        && \mathit{State} & \ni P, Q, S && :\!:= \mathit{emp} \ | \ A \mapsto z \ | \ P * Q
       \\
    \text{(Planning) Context}
        && \Gamma & \ni \gamma && :\!:= \phi(\overline{x}) ; \{ P(\overline{x})  \}
                            \strans \{ Q(\overline{x}) \} \ | \  \alpha \ \overline{x}
    \\
    \text{Specification}
        && \mathit{Specification} & \ni G && :\!:= \Gamma \vdash \{ P \} \strans \{ Q \} \ | \ f
  \end{aligned}
\end{equation*}
\caption{\emph{\small{ The syntax of PCP logic }}}
  \label{fig:grammarsep}
\end{figure*}

\textbf{Actions and plans.} Let
$\mathrm{A}$ be a set of action names $\{\alpha, \alpha_1, \alpha_2, ...\}$. Figure~\ref{fig:grammarsep} defines an action as an action name applied to a list of terms, e.g.
$ \mathit{pickup\_from\_table \ a}$ is an action. A plan is a sequence of actions; $\mathit{shrink}$ is a special constructor that can be
used in a plan instead of an action, its use will be made clear later.

\textbf{States and contexts.}
Polarities $+$ and $-$ are used to denote absense or presence of certain atomic fact in a world.
Given a polarity $z$,
$A \mapsto z$ is a \emph{formula map}.
A \emph{state}  can be given by an empty state, a formula map or a conjunction of such maps (denoted by $*$).
A state $(A \mapsto z * P)$ is \emph{valid} if $A$ does not occur in $P$ and $P$ is a valid state.
We will only work with valid states in this paper.
A \emph{context} $\Gamma$ contains descriptions of actions in the form
$\phi(\overline{x}) ; \{ P(\overline{x})  \}
                    \strans \{ Q(\overline{x}) \} \ | \  \alpha \ \overline{x}$
where $\{ P(\overline{x})\} \strans \{ Q(\overline{x}) \}$ denotes a transformation
from a state $P(\overline{x})$ to a state $Q(\overline{x})$, $\alpha \ \overline{x}$
is an action and $\phi(\overline{x})$ is a constraint list.

\begin{remark}
  To simplify our notation, we extend the use of notation ``$(\overline{x})$'' from atomic formulae, such as $R(\overline{x})$, to states (e.g. $Q(\overline{x})$), actions (e.g. $\alpha(\overline{x})$) and constraints (e.g. $\phi(\overline{x})$). In all these cases, the presence of $\ \overline{x}$ signifies the presence of free variables $\overline{x}$ in the states, actions, and constraints, respectively.
  We will drop $\overline{x}$ and will write just $Q$, $\alpha$, and $\phi$ to emphasise that the state, action or constraint do not contain any variables, i.e. they are \emph{ground}.
  \end{remark}

A plan specification is a sequent of the form:

\begin{centering}

\begin{tabular}{cc}
    $\Gamma \vdash \{ P \} \strans \{ Q \} \ | \ f$
\end{tabular}

\end{centering}

\begin{figure}[t]
  \centering
  \begin{tabular}{cc}
    \inference[NilSub \ ]
      {}
      {S \st emp}
    &
    \inference[ASub\ ]
      {S' \st S \quad
       A \mapsto z \in S'}
      {S' \st A \mapsto z * S }
  \end{tabular}
  \caption{\emph{Subtyping order on states.}}
  \label{fig:subtyping}
\end{figure}

\noindent It states that given a context $\Gamma$, $f$ is a plan that gives a provable
transformation from (ground) state $P$ to (ground) state $Q$. In the Curry-Howard interpretation of this logic, we view $f$ as a function that inhabits type $ \{ P \} \strans \{ Q \}$.

In all examples, we use the following shorthand notation:
$$R \ t \mapsto z * R \ t_1 \mapsto z \equiv R \ t,t_1 \mapsto z$$

\noindent For example, we will write  $(\mathit{onTable \ a,b} \mapsto +)$ instead of  $(\mathit{onTable \ a} \mapsto +) * (\mathit{onTable \ b} \mapsto +)$.
To emphasise that a formula map binds stronger than $*$, we will put parentheses around formula maps in all examples.
But we will omit the parentheses  in the formal grammar, to keep the notation simple.

  \begin{figure*}[t]
    \centering
    \begin{minipage}{.5\textwidth}
        \centering

        \begin{tabular}{c}
            \inference[Frame\ ]
            {\Gamma \vdash \{ P\}
                \strans \{ Q\} \ | \ \alpha}
                {\Gamma \vdash \{ P * A \mapsto z \}
                    \strans \{ Q * A \mapsto z\} \ | \ \alpha}
        \end{tabular} \\

        \hfill\break

        \begin{tabular}{c}
            \inference[Shrink \ ]
            {Q <: Q' \\
            \Gamma \vdash \{ P\}
                \strans \{ Q \} \ | \ f}
                {\Gamma \vdash \{ P \}
                    \strans \{ Q' \} \ | \ f ; \mathit{shrink}}
        \end{tabular} \\

        \hfill\break

    \end{minipage}%
    \begin{minipage}{0.5\textwidth}
        \centering

        \begin{tabular}{c}
            \inference[Weakening\ ]
            {P' <: P \\
            \Gamma \vdash \{ P\}
                \strans \{ Q \} \ | \ f}
                {\Gamma \vdash \{ P' \}
                    \strans \{ Q \} \ | \ f }
        \end{tabular} \\

        \hfill\break
        \hfill\break
        \hfill\break

        \begin{tabular}{cc}
            \inference[ApplyAction\ ]
                {
                 \phi(\overline{x}) ; \{ P(\overline{x}) \}
                                    \strans \{ Q(\overline{x}) \} \ | \ \alpha (\overline{x}) \in \Gamma }
                {\Gamma \vdash \{ P(\overline{x})[\sigma]  \}
                    \strans \{ Q(\overline{x})[\sigma] \} \ | \ \alpha (\overline{x}) [\sigma]}
        \end{tabular} \\
        \ \ \ \ \ \ \ \ \ \ \ \ \ \ \ \ \ \ \ \ \ \ \
        Where $\phi(\overline{x})[\sigma]$ normalises to $\top$

        \hfill\break

    \end{minipage}

    \begin{tabular}{c}
        \inference[Composition\ ]
        { Q <: Q' \\
          \Gamma \vdash \{ P\}
            \strans \{ Q\} \ | \ f \ \ \ \
                \Gamma \vdash \{ Q'\}
                    \strans \{ R\} \ | \ f_{1}}
            {\Gamma \vdash \{ P\}
                \strans \{ R\} \ | \ f ; f_{1}}
    \end{tabular} \\

    \caption{Rules of the PCP logic. The rules operate on valid states.}
    \label{fig:rules}
\end{figure*}

\subsection{Subtyping (order on states)}

We first recall the subtyping relation and the override operator on states introduced in \cite{schwaab2019proof}, and then establish some lemmas about these,
which will be useful in the later sections. The lemmas have not appeared in \cite{schwaab2019proof}. We omit proofs here, but give them in Agda~\cite{ATH20}.

Figure~\ref{fig:subtyping} defines order $\st$ over states. Following \cite{schwaab2019proof}, we call it subtyping to refer to the fact that states can also be seen as types.
In this paper, subtyping serves us when we need to compare states or decide whether they are equal.
Two states $P$ and $Q$ are considered
equal if $P \st Q \ and \ Q \st P$.

\begin{example}[Subtyping]\label{ex:subt}
  Given:
  $Q \equiv (\{\mathit{onTable\ a} \mapsto -)\ *\ (\mathit{onTable\ b} \mapsto +)\ *\ (\mathit{clear\ a,b} \mapsto +)
\  *\ (\mathit{handEmpty} \mapsto -)\ *\ (\mathit{holding\ a} \mapsto +)$ and
$Q' \equiv (\mathit{onTable\ a} \mapsto -)\ *\ (\mathit{onTable\ b} \mapsto +)\ *\ (\mathit{clear\ a,b} \mapsto +)
\ *\ (\mathit{holding\ a} \mapsto +)$,
we have $Q \st Q'$.
\end{example}

 Subtyping is both
 reflexive and transitive, i.e. it is a \emph{pre-order}.

 \begin{lemma}[Subtyping is Preorder]
   Given states $P, Q, S$, we have:
   \begin{itemize}
   \item (reflexivity) $P \st P$;
     \item (transitivity) $P \st Q$ and $Q \st S$ implies $P \st S$.
     \end{itemize}
   \end{lemma}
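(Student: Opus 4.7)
The plan is to prove the two parts separately, both by straightforward structural inductions, but each one needs a small auxiliary lemma that falls out of the shape of the subtyping rules. Since the only way to derive $P \st Q$ is by NilSub (forcing $Q = \mathit{emp}$) or by ASub (forcing $Q$ to have the shape $A \mapsto z * Q'$), I will induct on the right-hand state in each case; this matches the shape of derivations in Figure~\ref{fig:subtyping}.

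For reflexivity, I would proceed by induction on $P$. The base case $P = \mathit{emp}$ is immediate from NilSub. For $P = A \mapsto z * P'$, applying ASub reduces the goal $P \st P$ to two subgoals: $A \mapsto z \in A \mapsto z * P'$, which is trivial, and $A \mapsto z * P' \st P'$. The second subgoal does not follow directly from the induction hypothesis $P' \st P'$, so I first establish an auxiliary \emph{weakening lemma}: if $P \st Q$, and $A \mapsto z * P$ is valid, then $A \mapsto z * P \st Q$. This lemma is proved by induction on $Q$; the NilSub case is immediate, and in the ASub case the required membership fact $A' \mapsto z' \in A \mapsto z * P$ follows from the corresponding membership in $P$, while the premise $A \mapsto z * P \st Q'$ is obtained from the induction hypothesis.

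For transitivity, I would induct on $S$. If $S = \mathit{emp}$, NilSub closes the case. If $S = A \mapsto z * S'$, inverting the derivation of $Q \st S$ via ASub yields $Q \st S'$ and $A \mapsto z \in Q$. The inductive hypothesis applied to $P \st Q$ and $Q \st S'$ gives $P \st S'$. To reassemble by ASub I still need $A \mapsto z \in P$, which I would extract from a second auxiliary lemma, \emph{membership preservation}: if $P \st Q$ and $A \mapsto z \in Q$, then $A \mapsto z \in P$. This lemma is an easy induction on $Q$: for $Q = A' \mapsto z' * Q'$, either $(A,z) = (A',z')$ and the claim comes from the ASub premise guaranteeing $A' \mapsto z' \in P$, or $A \mapsto z \in Q'$ and the inductive hypothesis applies (using $P \st Q'$ from inversion of ASub).

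The real calculations are trivial; the only mildly delicate point is that in the reflexivity proof one must invoke weakening with the validity side-condition, and this is precisely where the restriction in the excerpt to valid states (formula maps with distinct atoms) matters — without it, the weakening step could introduce a duplicate atom and violate the implicit invariant. I would therefore state the two auxiliary lemmas explicitly for valid states, and then reflexivity and transitivity follow by the inductions sketched above, mirroring the Agda development of~\cite{ATH20}.
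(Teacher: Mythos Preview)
Your argument is correct. The paper itself omits the proof in the text and defers entirely to the Agda development, so there is no textual proof to compare against; your decomposition into two auxiliary lemmas is the natural one, and your ``weakening lemma'' is in fact stated separately in the paper as Lemma~\ref{lem:SubAdd} (Monotonicity of Subtype Expansion), placed just after the preorder lemma rather than before it. One small over-statement: the weakening lemma does not actually require the validity side-condition you attach to it --- membership $A' \mapsto z' \in P$ lifts to $A' \mapsto z' \in A \mapsto z * P$ regardless of whether $A$ already occurs in $P$ --- and indeed the paper's Lemma~\ref{lem:SubAdd} carries no such hypothesis. Validity is a standing global assumption on states throughout the paper, but it plays no role in this particular induction.
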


In later sections, we will also need an override operator on states:

\begin{definition}[Override Operator \cite{schwaab2019proof}]

\begin{align*}
 P \sqcup emp &= P \\
 P \sqcup [A \mapsto z * Q] &=
   [A \mapsto z * P \backslash \{A \mapsto + * A \mapsto -\}] \sqcup Q
\end{align*}

\end{definition}

The override operator adds all formula maps from one state to the other. If a
mapping for a formula that is to be added already exists, then that formula is removed
before adding the new formula map.

\begin{example}[Override Operator]
  $\ \\ $
  $ (\mathit{handEmpty} \mapsto +) * (\mathit{onTable\ a} \mapsto +) * (\mathit{clear\ a} \mapsto +) \ \sqcup \  \\
  (\mathit{handEmpty} \mapsto -)  * (\mathit{onTable\ a} \mapsto -) * (\mathit{holding\ a} \mapsto +) \\
  = (\mathit{holding\ a} \mapsto +) * (\mathit{onTable\ a} \mapsto -) *  (\mathit{handEmpty} \mapsto -) * (\mathit{clear\ a} \mapsto +)\\ $

\end{example}

We have the following lemmas summarising the properties of the subtyping relation and the override operator.

  \begin{lemma}[Order of Subtyping]\label{lem:ProofSub} Given an atom $A$ and states $P$ and $Q$, if $A \notin Q$
  and $Q \st P$ then $A \notin P$.
  \end{lemma}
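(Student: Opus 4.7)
The plan is to proceed by induction on the derivation of $Q \st P$, which amounts to induction on the structure of $P$ (since the subtyping rules in Figure~\ref{fig:subtyping} are syntax-directed on the right-hand state). The key intuition is that $Q \st P$ should imply that every atom mentioned in $P$ also appears in $Q$, so the contrapositive ``$A \notin Q$ forces $A \notin P$'' is immediate.

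For the base case, suppose $P = \mathit{emp}$. Then $P$ contains no formula maps at all, so $A \notin P$ holds vacuously, independent of the hypothesis $A \notin Q$.

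For the inductive case, suppose $P = A' \mapsto z' * P'$. The only rule that can derive $Q \st A' \mapsto z' * P'$ is ASub, so inversion yields two premises: $Q \st P'$ and $A' \mapsto z' \in Q$. I would combine these with $A \notin Q$ as follows. First, since $A' \mapsto z' \in Q$ but $A \notin Q$ (which I take to mean $A$ does not occur as the key of any formula map in $Q$, irrespective of polarity), we must have $A \neq A'$. Second, applying the induction hypothesis to $Q \st P'$ and $A \notin Q$ gives $A \notin P'$. Together, $A \neq A'$ and $A \notin P'$ yield $A \notin A' \mapsto z' * P'$, which is the goal.

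The main obstacle, modest as it is, lies in the bookkeeping around the auxiliary predicate $A \notin \cdot$: one needs a small companion lemma (or an explicit unfolding) that says $A \notin (A' \mapsto z' * P')$ iff $A \neq A'$ and $A \notin P'$, and likewise that $A' \mapsto z' \in Q$ together with $A \notin Q$ entails $A \neq A'$. These are straightforward from the definition of state membership, but in the Agda formalisation they will need to be stated and discharged explicitly before the induction goes through cleanly. Beyond that, no structural subtleties arise, since validity of states (ensuring no atom is bound twice) is assumed throughout and only simplifies the reasoning.
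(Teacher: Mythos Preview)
Your proof is correct and follows exactly the approach one would expect: induction on the derivation of $Q \st P$, using inversion on \textsc{ASub} in the inductive step to extract $A' \mapsto z' \in Q$ and $Q \st P'$, then combining $A \neq A'$ (from $A \notin Q$) with the inductive hypothesis. The paper itself does not spell out a proof in the text, deferring instead to the accompanying Agda development; your argument is the natural one and matches what that formalisation does.
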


  \begin{lemma}[Monotonicity of Subtype Expansion]\label{lem:SubAdd}
  Given states $P$ and $Q$ and a formula map $A \mapsto z$, if $Q \st P$ then
  $A \mapsto z * Q \st P$.
  \end{lemma}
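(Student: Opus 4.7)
The plan is to proceed by induction on the derivation of $Q \st P$, which effectively amounts to structural induction on $P$ since the two subtyping rules (NilSub and ASub) are directed by the right-hand state.

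For the base case, the derivation ends in NilSub, so $P = \mathit{emp}$. Then I apply NilSub directly to conclude $A \mapsto z * Q \st \mathit{emp}$, regardless of what $Q$ looks like.

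For the inductive case, the derivation ends in ASub, so $P$ has the form $B \mapsto z' * P'$ for some atom $B$, polarity $z'$ and state $P'$, and the premises give me $Q \st P'$ together with $B \mapsto z' \in Q$. To obtain $A \mapsto z * Q \st B \mapsto z' * P'$ via ASub, I need two things: first, the subtyping premise $A \mapsto z * Q \st P'$, which follows immediately by the induction hypothesis applied to $Q \st P'$; and second, the membership premise $B \mapsto z' \in A \mapsto z * Q$, which follows from $B \mapsto z' \in Q$ because extending a state with $A \mapsto z$ on the left can only add formula maps, never remove them. Combining these two via ASub yields the desired conclusion.

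The only potential obstacle is the membership preservation step: I need a small auxiliary fact that $B \mapsto z' \in Q$ implies $B \mapsto z' \in A \mapsto z * Q$. Under the paper's implicit convention that we only manipulate valid states, $A$ is distinct from every atom in $Q$ (including $B$), so membership in the extended state is obtained by one more unfolding of the $*$ constructor and appealing to the induction hypothesis on the membership relation. This is routine in the Agda formalisation, where $\in$ is defined by structural recursion on the state. Once this helper is in hand, the two-case induction above completes the argument without any further subtlety, and in particular does not require appeal to Lemma~\ref{lem:ProofSub} (though that lemma would be needed separately to confirm that $A \mapsto z * Q$ remains a valid state when $Q \st P$).
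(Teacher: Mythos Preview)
Your proof is correct; the paper omits this proof entirely and defers to the Agda formalisation, but induction on the derivation of $Q \st P$ is the natural route and is almost certainly what the formalisation does. One minor simplification: the membership step $B \mapsto z' \in Q \Rightarrow B \mapsto z' \in A \mapsto z * Q$ needs neither validity nor $A \neq B$ --- with $\in$ read as ordinary list membership it is just the ``there'' constructor, no inner induction required.
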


\begin{lemma}[Post-condition Override]\label{lem:PostSubOverride}  $(P \sqcup Q) \st Q$ holds for all states $P$ and $Q$.
\end{lemma}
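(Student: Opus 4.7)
The plan is to proceed by structural induction on $Q$, following the two clauses in the definition of $\sqcup$.

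For the base case $Q = \mathit{emp}$, we have $P \sqcup \mathit{emp} = P$ by definition, and $P \st \mathit{emp}$ follows immediately from the NilSub rule in Figure~\ref{fig:subtyping}.

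For the inductive step $Q = A \mapsto z * Q'$, the definition gives
\[
  P \sqcup (A \mapsto z * Q') \;=\; \bigl[A \mapsto z * (P \backslash \{A \mapsto +, A \mapsto -\})\bigr] \sqcup Q'.
\]
Write $P' = A \mapsto z * (P \backslash \{A \mapsto +, A \mapsto -\})$. To apply the ASub rule and conclude $P' \sqcup Q' \st A \mapsto z * Q'$, I need two things: (i) $P' \sqcup Q' \st Q'$, and (ii) $A \mapsto z \in P' \sqcup Q'$. Claim (i) is exactly the induction hypothesis applied to $P'$ and $Q'$. Claim (ii) is the point where I expect the main difficulty to lie: I must argue that the formula map $A \mapsto z$, which was freshly installed into $P'$, survives the subsequent overriding by $Q'$. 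This uses validity crucially — since $A \mapsto z * Q'$ is a valid state, $A$ does not occur anywhere in $Q'$, and therefore the removal step in the recursive unfolding of $P' \sqcup Q'$ never strips $A$.

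The obstacle in (ii) is thus a supporting lemma that needs to be spelled out carefully: if $A$ does not occur in $Q'$, then for any state $S$ containing $A \mapsto z$, the override $S \sqcup Q'$ still contains $A \mapsto z$. This can be proved by a side induction on $Q'$, with a further appeal to Lemma~\ref{lem:ProofSub} to track that atoms absent from $Q'$ remain absent from its sub-states, and a straightforward case analysis showing that neither the removal operation $\backslash\{A\mapsto+, A\mapsto-\}$ (triggered on atoms different from $A$) nor the prepending of a different formula map disturbs the binding $A \mapsto z$.

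With (i) from the induction hypothesis and (ii) from this auxiliary lemma, ASub yields $P \sqcup (A \mapsto z * Q') \st A \mapsto z * Q'$, closing the induction. The Agda development should mirror this structure: one fixpoint on $Q$ for the main statement, and a second fixpoint for the ``preservation of a formula map under override'' lemma that underlies (ii).
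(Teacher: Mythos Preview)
Your approach is correct and natural: induction on $Q$, with the base case by NilSub and the inductive step by ASub using the induction hypothesis for premise (i) and a preservation-under-override fact for premise (ii). The paper does not give a textual proof of this lemma (it defers all of Lemmas~\ref{lem:ProofSub}--\ref{lem:OverrideMembership} to the Agda development), so there is no detailed argument to compare against; but your decomposition is exactly the obvious one and matches what the Agda proof would have to do.

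One remark: the auxiliary fact you isolate for (ii) --- that if $A$ does not occur in $Q'$ then $A \mapsto z \in (A \mapsto z * P'') \sqcup Q'$ --- is precisely Lemma~\ref{lem:OverrideMembership} (Monotonicity of Override), stated immediately after the present lemma in the paper. So you need not develop it as a fresh side induction; you can simply cite it, using the validity of $Q = A \mapsto z * Q'$ to supply its hypothesis $A \notin Q'$. That the paper places Lemma~\ref{lem:OverrideMembership} after Lemma~\ref{lem:PostSubOverride} is just an ordering choice in the exposition, not an indication that the former is unavailable for the latter.
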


 \begin{lemma}[Monotonicity of Override]\label{lem:OverrideMembership} Given a polarity $z$, an atom $A$, states
  $P$ and $Q$, if $A \notin Q$ then $A \mapsto z \in (A \mapsto z * P) \sqcup Q$.
  \end{lemma}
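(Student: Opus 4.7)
The plan is to prove the lemma by induction on the structure of the right-hand state $Q$ in the override expression, since the definition of $\sqcup$ recurses on its right argument. However, a direct induction on the stated claim will not go through cleanly: after one unfolding step, the accumulator on the left is no longer syntactically of the form $A \mapsto z * P'$ but rather $A' \mapsto z' * ((A \mapsto z * P) \backslash \{A' \mapsto +, A' \mapsto -\})$, so the induction hypothesis does not immediately apply. The fix is to first strengthen the statement to a more uniform membership-preservation principle:
\[
\text{if } A \mapsto z \in S \text{ and } A \notin Q, \text{ then } A \mapsto z \in S \sqcup Q.
\]
The original lemma is the special case $S \defeq A \mapsto z * P$, whose membership hypothesis is immediate.

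For the strengthened statement, the base case $Q = emp$ is trivial since $S \sqcup emp = S$ by definition. For the inductive case, write $Q = A' \mapsto z' * Q'$; by the hypothesis $A \notin Q$ we obtain both $A \neq A'$ and $A \notin Q'$. Unfolding $\sqcup$ once gives
\[
S \sqcup Q \;=\; \bigl(A' \mapsto z' * (S \backslash \{A' \mapsto +, A' \mapsto -\})\bigr) \sqcup Q'.
\]
To apply the induction hypothesis with the new accumulator $S'' \defeq A' \mapsto z' * (S \backslash \{A' \mapsto +, A' \mapsto -\})$, I would check that $A \mapsto z \in S''$. This follows from two observations: (i) removing the formula maps $A' \mapsto +$ and $A' \mapsto -$ from $S$ does not remove $A \mapsto z$ because $A \neq A'$, so $A \mapsto z \in S \backslash \{A' \mapsto +, A' \mapsto -\}$; and (ii) prepending $A' \mapsto z'$ preserves membership of $A \mapsto z$ trivially. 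Then the induction hypothesis applied to $S''$ and $Q'$ (using $A \notin Q'$) delivers $A \mapsto z \in S'' \sqcup Q'$, which is exactly $A \mapsto z \in S \sqcup Q$.

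The main obstacle I expect is the auxiliary reasoning about the set-minus operation $\backslash$: I need the fact that removing mappings for an atom $A'$ does not disturb mappings for a different atom $A$. In Agda, this will likely be a separate small lemma about the interaction of $\backslash$ with $\in$, decided on the equality of atoms. The validity assumption on states (no duplicate atoms) should simplify this, since it guarantees $A \mapsto z$ is the unique mapping for $A$ throughout the construction. Once that auxiliary fact is in place, the induction on $Q$ is routine.
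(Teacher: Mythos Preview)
Your proposal is correct. The paper does not present a proof of this lemma in the text; it explicitly states that proofs of these auxiliary lemmas are omitted and given only in the accompanying Agda development. So there is no paper proof to compare against directly.

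That said, your approach is exactly the natural one and almost certainly mirrors what the Agda formalisation does. The definition of $\sqcup$ recurses on its right argument, so induction on $Q$ is forced; and as you observed, the accumulator changes shape after one unfolding, so generalising to the membership-preservation principle ``if $A \mapsto z \in S$ and $A \notin Q$ then $A \mapsto z \in S \sqcup Q$'' is the standard strengthening needed to make the induction go through. The only nontrivial ingredient is the interaction of $\in$ with the deletion $\backslash$, which you correctly isolate as a small auxiliary fact decided on atom equality. Nothing is missing.
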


\subsection{Normalisation of Constraint Lists}

We will now define a normalisation function for constraint lists. This function takes
a list of constraints and recurses through them checking that they are true.
If a constraint is not true, $\bot$ is returned; otherwise
the empty list case will be reached and $\top$ will be returned.
We use $t \equiv t_1$ to denote syntactic equivalence between terms.

\begin{definition}[Normalisation Function for Constraints]

\begin{align*}
  \mathit{norm}\ [] &= \top \\
  \mathit{norm}\ (t = t_1 :: \phi) &= \boldsymbol{if} \ t \equiv t_1 \ \boldsymbol{then}
            \ \mathit{norm}\ \phi \ \boldsymbol{else} \ \bot \\
  \mathit{norm}\ (t \neq t_1 :: \phi)  &= \boldsymbol{if} \ t \equiv t_1 \ \boldsymbol{then}
            \ \bot \ \boldsymbol{else} \ \mathit{norm}\ \phi \\
\end{align*}

\end{definition}

\begin{example}[Normalisation Function for Constraints]
  We have $\mathit{norm} \ [a = a, b = b] = \top $ but $\mathit{norm} \ [a = a, b = c] = \bot $.
  \end{example}

\subsection{Rules}\label{rules}

 Figure~\ref{fig:rules} gives the rules of the PCP logic.
We will discuss and illustrate each rule in order, using our
running example. In Figure~\ref{fig:pddl-blocksworld} a
PDDL definition of BlocksWorld is defined. An example context $\Gamma_{\mathit{BW}}$, inspired by that definition, is given in Figure~\ref{fig:context}.
Assume that this is the context for all below examples.

  \textbf{ApplyAction}
  checks that an action
is in the context and then constructs the resultant
state given by a ground substitution on that action. For example,
the $\mathit{pickup\_from\_table}$ action is included in $\Gamma_{\mathit{BW}}$ (cf. the first action in Figure~\ref{fig:context}).
Taking $P^a \equiv  \{(\mathit{handempty} \mapsto +) * (\mathit{onTable \ a} \mapsto +) * (\mathit{clear \ a} \mapsto +)\}$ and $Q^a \equiv  \{ (\mathit{handEmpty} \mapsto -) * (\mathit{onTable \ a} \mapsto -) * (\mathit{holding \ a} \mapsto +) * (\mathit{clear \ a} \mapsto +) \}$, we have

\begin{centering}

\begin{mathpar}
     {\inferrule*
         {\ (1) \in \Gamma_{\mathit{BW}} }
         {\Gamma_{\mathit{BW}} \vdash \{ P^a \} \strans
          \{ Q^a \}
          \\ | \ \mathit{pickup\_from\_table \ a }}}
\end{mathpar}
where $(1)$ refers to the first action in  $\Gamma_{\mathit{BW}}$.

\end{centering}

This is the only rule that allows us to access planning domain definitions. Note also that this is the only rule that checks whether constraints on states are satisfied. This is possible thanks to essentially propositional reasoning implemented in planning, thus it is sufficient to check the constraints only once.

\textbf{Composition} rule says that if we have an entailment $\Gamma \vdash \{ P \} \strans \{ Q \} \ | \ f$
we can compose it together with another entailment $\Gamma \vdash \{ Q' \} \strans \{ R \} \ | \ f_1$
to produce $\Gamma \vdash \{ P \} \strans \{ R \} \ | \ f ; f_1$, if $Q \st Q'$.

For this example, we take $Q$ and $Q'$ as in Example~\ref{ex:subt} (with $Q \st Q'$), and we take $P$ and $R$ as follows:\\
$P \equiv (\mathit{onTable\ a,b} \mapsto +)\ *\ (\mathit{clear\ a,b} \mapsto +)\ *\ (\mathit{handEmpty} \mapsto +)$\\
$R \equiv (\mathit{onTable\ a} \mapsto -)\ *\ (\mathit{onTable\ b} \mapsto +)\ *\ (\mathit{clear\ a} \mapsto +)
    \ *\ (\mathit{clear\ b} \mapsto -)\ *\ (\mathit{handEmpty} \mapsto +)\ *\ (\mathit{holding\ a} \mapsto -)
    \ *\ (\mathit{on\ a\ b} \mapsto +)$.
    Abbreviating $\mathit{putdown\_on\_stack \ a \ b}$ as $\alpha$ and $\mathit{pickup\_from\_table\ a}$ as $f$, we have the following application of the Composition rule:

\begin{centering}

\begin{mathpar}

{\inferrule* %[Right=Composition,rightstyle=\bf,
             %     leftskip=2em,rightskip=2em]
{ Q <: Q' \\
  \Gamma_{\mathit{BW}} \vdash \{ P\}
    \strans \ \{ Q\} \ | \ f \ \ \ \
        \Gamma_{\mathit{BW}} \vdash \{ Q'\}
            \strans \ \{ R\} \ | \ \alpha}
    {\Gamma_{\mathit{BW}} \vdash \{ P\}
        \strans \ \{ R\} \ | \ f ; \alpha}}

\end{mathpar}

\end{centering}

  The \textbf{Frame} rule allows the addition of formula maps to both states in an entailment,
provided the atom of the formula map does not already have a mapping in
either state. Continuing the derivation in one of the previous examples,
%Suppose $P^a =  \{handempty \mapsto + * onTable \ a \mapsto + * clear \ a \mapsto +\}$ and $Q^a =  \{ handEmpty \mapsto - * onTable \ a \mapsto - * holding \ a \mapsto + * clear \ a \mapsto + \}$, as in the previous example.
 the following application of the frame rule is possible:
%States $P$ and $Q$ are respectively the pre and post states of example~\ref{ex:ApplyAction}:

\begin{centering}

\begin{mathpar}
     {\inferrule* [Right=Frame,rightstyle=\bf,
                       leftskip=2em,rightskip=2em]
         {\Gamma_{\mathit{BW}} \vdash \{ P^a \} \strans \{ Q^a \} \ | \ \mathit{pickup\_from\_table \ a}}
         {\Gamma_{\mathit{BW}} \vdash \{ P^a * (\mathit{onTable \ b} \mapsto +) \}
          \strans \\ \{ Q^a * (\mathit{onTable \ b} \mapsto +) \} \ | \ \mathit{pickup\_from\_table \ a}}}
\end{mathpar}

\end{centering}

In our system, the frame rule is more restrictive than can be seen in other
logics such as Separation logic \cite{reynolds2002separation,berdine2005symbolic},
as it can only be used at an action level but
not at a plan level.
The following example shows the problem with consistency of derivations,  if we apply
the frame rule to arbitrary judgements of the form
$\Gamma \vdash \{ P \} \strans \{ Q \} \ | \ f$.

\begin{example}[Problems with the Frame rule for complex plans]
Imagine we have an action $\alpha$ with the
transformation $\{\mathit{clear\ a} \mapsto +\} \strans \{(\mathit{clear\ a} \mapsto -) \ * \ (\mathit{clear\ b} \mapsto +) \}$
and another action $\alpha'$ with the transformation $\{\mathit{clear\ a} \mapsto -\} \strans \{\mathit{clear\ a} \mapsto + \}$
then we can compose these two actions together to generate the entailment:
$ \Gamma \vdash \{\mathit{clear\ a} \mapsto + \} \strans \{\mathit{clear\ a} \mapsto +\} \ | \  \alpha ; \alpha' $.
We have lost the information $\mathit{clear\ b} \mapsto +$ and if the Frame rule was
not bound to single actions we could frame incorrectly in the entailment:
$ \Gamma \vdash \{(\mathit{clear\ a} \mapsto +) * (\mathit{clear\ b} \mapsto -) \} \strans \{(\mathit{clear\ a} \mapsto +) * (\mathit{clear\ b} \mapsto -)\} \ | \  \alpha ; \alpha' $, getting a derivation inconsistent with the action definition.
\end{example}

If we want to apply this rule on judgements involving complex plans instead of single actions,
then we would need to ensure that the framed atom is not mapped in any state at any level in the
plan derivation. This could be done by amending the restrictions on the frame rule
or by amending the other rules to prevent loss of information.

 \textbf{Weakening}
is applied before composition, when a formula map we want in the precondition $P$
already exists in the (previously obtained) post-condition $Q$.
The above example shows a use case with the action $\alpha$ as defined above: % $putdown\_on\_stack$ action. % in blocksworld.

\begin{centering}

\begin{mathpar}
       {\inferrule* %[Right=Weakening,rightstyle=\bf,
                     %    leftskip=2em,rightskip=2em]
           {\Gamma_{\mathit{BW}} \vdash \{ (\mathit{clear\ b} \mapsto +) \ * \ (\mathit{holding\ a} \mapsto +) \} \\
            \strans \{ (\mathit{clear\ b} \mapsto -) \ * \ (\mathit{holding\ a} \mapsto -)
              \ * \ (\mathit{on\ a\ b} \mapsto +) \ * \ \\  (\mathit{handEmpty} \mapsto +) \} \ | \ \alpha}
            {\Gamma_{\mathit{BW}} \vdash \{ (\mathit{clear\ b} \mapsto +) \ * \ (\mathit{holding\ a} \mapsto +) \ * \\ (\mathit{handEmpty} \mapsto -) \}
             \strans \{ (\mathit{clear\ b} \mapsto -) \ * \\ (\mathit{holding\ a} \mapsto -)
               \ * \ (\mathit{on\ a\ b} \mapsto +) \ * \  (\mathit{handEmpty} \mapsto +) \} \ | \ \alpha}}
\end{mathpar}

\end{centering}

\noindent In BlocksWorld it is implied that $\mathit{handEmpty}$ is false when $holding$
any block is true and vice versa. This leads the $\mathit{putdown\_on\_stack}$ action's
preconditions to only contain the precondition that $\mathit{holding}$ a block has to be true
and we use weakening to gain back the information that $\mathit{handEmpty}$ is false.

\textbf{Shrink} allows us to shrink and reorder the post-condition state. Any
postcondition state $Q$ can be replaced with $Q'$ as long as it is
a subtype of the current post state. Shrink can appear anywhere in a plan but
currently the main use of this rule is when  we have a goal state that is smaller
than the obtained post-condition, for example:

\begin{centering}

\begin{mathpar}
       {\inferrule* %[Right=Shrink,rightstyle=\bf,
                     %    leftskip=2em,rightskip=2em]
           {\Gamma \vdash \{ P \}
            \strans  \{ (\mathit{clear\ b} \mapsto -) \ * \ (\mathit{holding\ a} \mapsto -)
              \ * \\ (\mathit{on\ a\ b} \mapsto +) \ * \ (\mathit{handEmpty} \mapsto +) \} \ | \ f}
            {\Gamma \vdash \{P \}
             \strans \{ (\mathit{on\ a\ b} \mapsto +) \} \ | \ f;\mathit{shrink}}}
\end{mathpar}

\end{centering}

Frame, Weakening and Shrink are structural rules, i.e. they do not change the computational properties of plans, and with the exception of Shrink, do not change the plans syntactically.
We finish this section by stating two lemmas that explain subtyping for plans derived by structural rules. Note that all actions have unique definitions in any given context $\Gamma$. The proofs of these lemmas are given in Agda~\cite{ATH20}.

\begin{lemma}[Property of structural rules (left)]\label{lem:PreSatP} If there is a derivation for
  $ \Gamma \vdash \{ P \} \strans \{ Q  \} \ | \ \alpha $ by the rules of  Figure~\ref{fig:rules}
  we have: $\{ P'(\overline{x}) \} \strans \{ Q'(\overline{x})  \} \ | \ \alpha(\overline{x}) \in \Gamma$
   and $P \st P'(\overline{x})[\sigma]$.
  \end{lemma}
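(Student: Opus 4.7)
The plan is to proceed by structural induction on the derivation of $\Gamma \vdash \{ P \} \strans \{ Q \} \ | \ \alpha$. The crucial observation is that the plan in the conclusion is a single action $\alpha$, so the last rule of the derivation must be one whose conclusion yields a plan of this shape. Inspecting Figure~\ref{fig:rules}, Composition produces a plan of the form $f ; f_1$ and Shrink produces a plan of the form $f ; \mathit{shrink}$; neither can be the last rule. Hence only three cases remain: ApplyAction, Frame, and Weakening.

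In the ApplyAction base case, the conclusion reads $\Gamma \vdash \{ P'(\overline{x})[\sigma] \} \strans \{ Q'(\overline{x})[\sigma] \} \ | \ \alpha(\overline{x})[\sigma]$, with $\{ P'(\overline{x}) \} \strans \{ Q'(\overline{x}) \} \ | \ \alpha(\overline{x}) \in \Gamma$ as the premise. Thus the membership condition is immediate, and $P = P'(\overline{x})[\sigma]$, so $P \st P'(\overline{x})[\sigma]$ follows from reflexivity of subtyping.

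In the Frame case, the derivation ends with a premise $\Gamma \vdash \{ P_0 \} \strans \{ Q_0 \} \ | \ \alpha$ where $P = P_0 * A \mapsto z$. By the induction hypothesis, there exist $P'(\overline{x}), Q'(\overline{x})$ such that $\{ P'(\overline{x}) \} \strans \{ Q'(\overline{x}) \} \ | \ \alpha(\overline{x}) \in \Gamma$ and $P_0 \st P'(\overline{x})[\sigma]$. The membership condition transfers verbatim, and Lemma~\ref{lem:SubAdd} (Monotonicity of Subtype Expansion) lifts $P_0 \st P'(\overline{x})[\sigma]$ to $P_0 * A \mapsto z \st P'(\overline{x})[\sigma]$, which is the required $P \st P'(\overline{x})[\sigma]$. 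In the Weakening case, the premise is $\Gamma \vdash \{ P_0 \} \strans \{ Q \} \ | \ \alpha$ with the side condition $P \st P_0$. The induction hypothesis yields $\{ P'(\overline{x}) \} \strans \{ Q'(\overline{x}) \} \ | \ \alpha(\overline{x}) \in \Gamma$ and $P_0 \st P'(\overline{x})[\sigma]$. Combining $P \st P_0$ with this via transitivity of subtyping gives $P \st P'(\overline{x})[\sigma]$.

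I do not anticipate a real obstacle here: once one recognises that the single-action shape of the plan rules out Composition and Shrink, the remaining three cases fall out mechanically from reflexivity/transitivity of $\st$ and Lemma~\ref{lem:SubAdd}. The only point requiring mild care is making explicit that the existentially quantified $P'(\overline{x}), Q'(\overline{x}), \sigma$ obtained from the induction hypothesis are the same ones witnessing the conclusion (since actions have unique definitions in $\Gamma$, the context entry for $\alpha(\overline{x})$ is fixed, and the substitution $\sigma$ is determined by the ground action in the conclusion).
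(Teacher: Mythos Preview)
Your argument is correct. The paper does not give a written proof of this lemma; it states only that ``The proofs of these lemmas are given in Agda~\cite{ATH20}'', so there is no textual proof to compare against. Your structural-induction argument is exactly the natural one and is almost certainly what the Agda formalisation does: rule out Composition and Shrink by the shape of the plan, handle ApplyAction by reflexivity, Weakening by transitivity, and Frame by Lemma~\ref{lem:SubAdd}.

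One tiny cosmetic point: in the Frame case you write $P = P_0 * A \mapsto z$ and then invoke Lemma~\ref{lem:SubAdd}, which is stated for $A \mapsto z * Q$ (formula map on the left). Since $\st$ is defined via membership (rule ASub), the side on which the new map is adjoined is immaterial, but if you want to match the lemma literally you may wish to remark that membership in $P_0$ implies membership in $P_0 * A \mapsto z$, or simply note that the two orderings are interderivable under $\st$. This is purely notational and does not affect the validity of your proof.
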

  \begin{lemma}[Property of structural rules (right)]\label{lem:PostSatQ} If there is a derivation
  $ \Gamma \vdash \{ P \} \strans \{ Q  \} \ | \ \alpha $  by the rules of  Figure~\ref{fig:rules}, we have   $\{ P'(\overline{x}) \} \strans \{ Q'(\overline{x})  \} \ | \ \alpha(\overline{x}) \in \Gamma$
  and $Q \st Q'(\overline{x})[\sigma] $.
\end{lemma}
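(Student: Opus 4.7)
The plan is to proceed by rule induction on the derivation of $\Gamma \vdash \{P\} \strans \{Q\} \mid \alpha$. Since the plan in the conclusion is a single action $\alpha$ (not a composition and not $\mathit{shrink}$), only three of the five rules in Figure~\ref{fig:rules} can have produced this judgement: ApplyAction, Frame, and Weakening. Composition and Shrink are immediately ruled out because syntactically they produce compound plans of the forms $f ; f_1$ and $f ; \mathit{shrink}$ respectively, neither of which is a bare action. This drastically limits the cases to analyse.

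In the base case ApplyAction, the premise directly delivers the witness $\{P'(\overline{x})\} \strans \{Q'(\overline{x})\} \mid \alpha(\overline{x}) \in \Gamma$, and the post-condition in the conclusion is literally $Q'(\overline{x})[\sigma]$. Reflexivity of subtyping (from the Preorder lemma) then yields $Q \st Q'(\overline{x})[\sigma]$.

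For the Frame case, the premise has the shape $\Gamma \vdash \{P_0\} \strans \{Q_0\} \mid \alpha$ and the conclusion is $\Gamma \vdash \{P_0 * A \mapsto z\} \strans \{Q_0 * A \mapsto z\} \mid \alpha$. Applying the induction hypothesis to the premise gives both the matching definition in $\Gamma$ and $Q_0 \st Q'(\overline{x})[\sigma]$. I then invoke Monotonicity of Subtype Expansion (Lemma~\ref{lem:SubAdd}) to lift this to $A \mapsto z * Q_0 \st Q'(\overline{x})[\sigma]$, which is exactly what is needed for the framed post-condition. For the Weakening case the post-condition is unchanged between premise and conclusion, so the induction hypothesis transfers verbatim.

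The proof is essentially a routine case analysis; the only subtlety is not a mathematical obstacle but a bookkeeping one: we must appeal to the uniqueness of action definitions in $\Gamma$ (noted in the paragraph immediately preceding the lemma) to know that the action definition produced as the witness is canonically determined by $\alpha$, so that the $\sigma$ threaded through Frame and Weakening refers to the same definition used in the ApplyAction leaf. Once this observation is made, each inductive case is discharged by a single previously-established lemma, and no further combinatorial work is required.
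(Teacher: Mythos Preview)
Your proof is correct and follows the only natural route: rule induction on the typing derivation, noting that a bare action $\alpha$ can only arise from ApplyAction, Frame, or Weakening, and discharging each case with reflexivity, Lemma~\ref{lem:SubAdd}, and the unchanged post-condition respectively. The paper does not spell out a proof in the text but defers to the Agda formalisation, which proceeds by exactly this case analysis; your reconstruction matches it.
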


Given a planning context $\Gamma$, we say that a plan $f$ is \emph{well-typed} (for $ \{ P \} \strans \{ Q \}$), if there is a derivation of   $\Gamma \vdash \{ P \} \strans \{ Q \} \ | \ f$
by the rules of Figure~\ref{fig:rules}.

\section{Soundness of the PCP Logic}\label{sec:sound}

We now show that the PCP logic we introduced in previous sections is sound relative to the possible world semantics of PDDL~\cite{fox2003pddl2}.

\subsection{Possible World Semantics for PDDL Languages}

Coming back to our running example of a PDDL domain, given in Figure~\ref{fig:pddl-blocksworld}, we notice that it is defined in a subset of first-order logic,
while the actual problem description (Figure~\ref{fig:pddl-blocksworld2}) contains only ground terms.
This motivates us to formally define \emph{PDDL formulae} as follows:

\begin{definition}[PDDL Formulae]
  \begin{equation*}
  \begin{aligned}
       \text{Ground Atoms}
        && \mathit{GAtom} & \ni A^g    && :\!:= R \ (c_1, ... \ c_n)
    \\
    \text{PDDL Formulae}
        && \mathit{Form} & \ni F,F_1 ... \ F_n    && :\!:=  A^g \ | \ \neg A^g \ | \ F \dot\wedge F_1\\
  \end{aligned}
  \end{equation*}
\end{definition}

\begin{figure}[t]
  \centering
  \begin{tabular}{cc}
    \inference
      {w \models_z  F \quad
       w \models_z  F_1}
      {w \models_z F \land F_1}
    &
    \inference
      {w \models_{\minus z} A^g}
      {w \models_z \neg A^g}
\\
\\
    \inference
      {A^g \in w}
      {w \models_\mplus A^g}
    &
    \inference
      {A^g \not \in w}
      {w \models_{\minus} A^g}
  \end{tabular}
  \caption{\emph{\small{Declarative interpretation of PDDL formulae. We define $\minus z$ by taking $ \minus + =  \minus$ and $ \minus \minus = +$.}}}
  \label{fig:well-formed-world}
\end{figure}

 \emph{Possible world semantics for PDDL}~\cite{fox2003pddl2} is defined in Figure~\ref{fig:well-formed-world}. %similar in some respects to \cite{kripke1963semantical}.
 A possible world, or just a \emph{world} is a set of ground atomic formulae.
We use letter $w$ to denote a single possible world.
Given a world $w$, a PDDL formula \emph{$F$ is satisfied by $w$} if
$w \models_{+} F$ can be derived by the rules of
Figure~\ref{fig:well-formed-world}. It should be noted that negation can only be
applied to atomic formulae.

It is useful to establish a correspondence between states and formulae.
Following \cite{schwaab2019proof,SHFK18}, we achieve this by introducing a ``normalisation'' function from PDDL formulae to states.

\begin{definition}[Normalisation of PDDL Formulae to States\cite{schwaab2019proof}]
The function $ \tnorm{z}{}$ \emph{normalises}  a PDDL formula to a state:
\begin{align*}
  \tnorm{z}{(F \land F_1)} S &= \tnorm{z}{F_1}\tnorm{z}{F} S \\
  \tnorm{z}{\neg A^g} S &= \tnorm{\minus z}{A^g} S \\
  \tnorm{z}{A^g} S &= A^g \mapsto z * S
\end{align*}
We write $\tnorm{z}{F}$ to mean ($\tnorm{z}{F} \mathit{emp}$).

\end{definition}

\begin{example}[Normalisation of a Formula to a State]

  {\small $$\pnorm{
  (\mathit{handEmpty} \land \neg \mathit{onTable\ a})} =
  \mathit{handEmpty} \mapsto + \ * \ \mathit{onTable\ a} \mapsto \minus$$}
\end{example}

Normalisation is sound relative to the possible world semantics.
A world $w_S$ is a \emph{well-formed world} for a given state $S$,
if the world $w_S$ contains all $A^g$'s such that $(A^g \mapsto {+}) \in S$ and
contains no $A^g$'s such that $(A^g \mapsto {\minus})  \in S
$\footnote{By abuse of notation that will not cause confusion, we will
use the symbol $\in$ to denote State membership, as well as set membership.}.
Generally $w_S$ is not uniquely defined, and we use the notation $\langle w_S \rangle$ to refer
to the (necessarily finite) set of all $w_S$.

\begin{example}[Well-Formed Worlds] \hfill\break
  If $S =$ {\small $\pnorm{ (\mathit{handEmpty} \land \neg \mathit{onTable\ a})} $}, then
    $w_S$ may be given by e.g. $w_1 = ${\small $ \{ \mathit{handEmpty} \} $, or $w_2
    = \{ \mathit{handEmpty}, \mathit{onTable\ b} \} $}, or any other world
    containing  {\small $ \mathit{handEmpty} $} but not  {\small $ \mathit{onTable\ a}$}. The given
    formula will be satisfied by any such $w_S$.
\end{example}

Well-formed worlds have the following property :

\begin{lemma}[Subtyping and Well-Formed Worlds]\label{lem:SubResp}
  If we have states $P$ and $Q$, $Q \st P$ and $w \in \langle w_Q \rangle$ then $w \in \langle w_P \rangle$.
\end{lemma}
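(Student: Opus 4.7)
The plan is to proceed by induction on the derivation of $Q \st P$, which, given the two rules NilSub and ASub in Figure~\ref{fig:subtyping}, coincides with induction on the structure of $P$. Unfolding the definition of well-formed world, the goal amounts to showing two things about $w$: that it contains every ground atom $A^g$ with $(A^g \mapsto +) \in P$, and that it contains no ground atom $A^g$ with $(A^g \mapsto -) \in P$. The assumption $w \in \langle w_Q\rangle$ gives the analogous statements for $Q$, so the crux is to transport each formula map of $P$ into $Q$ along the subtyping relation.

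First I would handle the base case, where $P \equiv \mathit{emp}$. Here the well-formedness condition is vacuous and $w \in \langle w_P \rangle$ holds for any $w$, so nothing needs to be done.

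For the inductive step, $P$ has the shape $A \mapsto z * P'$, and the derivation of $Q \st P$ must end in an instance of the ASub rule. Inverting that rule yields two useful facts: $Q \st P'$ and $A \mapsto z \in Q$. Applying the induction hypothesis to $Q \st P'$ (together with the still-valid assumption $w \in \langle w_Q\rangle$) gives $w \in \langle w_{P'}\rangle$, which discharges all the obligations of well-formedness for $P$ except the one concerning $A$. That remaining obligation then follows by a short case split on $z$: if $z = +$, then $(A \mapsto +) \in Q$ combined with $w \in \langle w_Q\rangle$ forces $A \in w$; symmetrically, if $z = -$, then $(A \mapsto -) \in Q$ forces $A \notin w$. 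In either case the condition required by $\langle w_P\rangle$ for the atom $A$ is met, so $w \in \langle w_P\rangle$.

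I do not foresee any real obstacle here. The only mildly delicate point is keeping straight the two halves of the well-formedness predicate (the positive-inclusion half and the negative-exclusion half) so that the case split on $z$ is done cleanly; everything else is a direct appeal to the subtyping inversion and the induction hypothesis. If one wanted to avoid induction on $P$ explicitly, the same argument can be packaged as follows: by a short auxiliary lemma, $Q \st P$ implies that every formula map of $P$ occurs in $Q$, after which the conclusion is immediate from the definition of $\langle w_{(-)}\rangle$.
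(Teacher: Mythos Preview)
Your proposal is correct and matches the approach one would expect in the paper's Agda formalisation: the paper omits the proof in the text and defers to the Agda development, where the natural argument is precisely induction on the derivation of $Q \st P$, inverting ASub to extract $A \mapsto z \in Q$ and the smaller subtyping premise, then closing with the case split on $z$. There is nothing to add; your handling of both the base case and the well-formedness unfolding is accurate.
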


Finally, we prove that normalisation is sound and complete:

\begin{theorem}[Soundness and completeness of normalisation \cite{schwaab2019proof,SHFK18}]\label{thm:norm-soundness}
  Given a formula $F$ and a world $w$, it holds that $ w \models_z F$ iff $w \in \langle w_{\tnorm{z}{F}} \rangle$.
\end{theorem}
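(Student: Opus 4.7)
The plan is to prove both directions simultaneously by structural induction on the PDDL formula $F$. Because the normalisation function $\tnorm{z}{\cdot}$ is defined using an accumulator $S$, the direct induction hypothesis on $F$ alone is too weak: in the conjunction case we would have to reason about $\tnorm{z}{F_2} \tnorm{z}{F_1} S$, which involves a non-empty accumulator. I would therefore first strengthen the statement to a lemma of the form: \emph{for every PDDL formula $F$, every polarity $z$, every state $S$, and every world $w$, $w \in \langle w_{\tnorm{z}{F} S} \rangle$ if and only if $w \models_z F$ and $w \in \langle w_S \rangle$.} The original theorem then falls out by instantiating $S = \mathit{emp}$ and observing that every world is trivially well-formed for the empty state.

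First I would handle the two base cases, $F = A^g$ with polarity $+$ and with polarity $-$, both of which reduce to unfolding the definitions of $\tnorm{z}{}$, of well-formedness, and of the satisfaction rules in Figure~\ref{fig:well-formed-world}; for polarity $+$ the condition $A^g \in w$ appears on both sides, and for polarity $-$ the condition $A^g \notin w$ appears on both sides, each being independent of (and compatible with) the accumulator clause $w \in \langle w_S \rangle$. Next, the negation case $F = \neg A^g$ is immediate: the definition of $\tnorm{z}{\neg A^g} S$ rewrites directly to $\tnorm{\minus z}{A^g} S$, the satisfaction rule rewrites $w \models_z \neg A^g$ to $w \models_{\minus z} A^g$, and the base case applied at polarity $\minus z$ closes the goal.

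The main case is the conjunction $F = F_1 \dot\wedge F_2$. Using the definition $\tnorm{z}{F_1 \dot\wedge F_2} S = \tnorm{z}{F_2} (\tnorm{z}{F_1} S)$, I would apply the induction hypothesis to $F_2$ with accumulator $\tnorm{z}{F_1} S$, obtaining $w \in \langle w_{\tnorm{z}{F_2}\tnorm{z}{F_1} S} \rangle$ iff $w \models_z F_2$ and $w \in \langle w_{\tnorm{z}{F_1} S} \rangle$; then apply the induction hypothesis to $F_1$ with accumulator $S$ to turn the latter into $w \models_z F_1$ and $w \in \langle w_S \rangle$. Combining gives $w \models_z F_1$ and $w \models_z F_2$ and $w \in \langle w_S \rangle$, which by the conjunction rule of Figure~\ref{fig:well-formed-world} is exactly $w \models_z F_1 \dot\wedge F_2$ together with $w \in \langle w_S \rangle$, as required.

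The hard part, and the reason the strengthening is needed, is precisely this propagation of the accumulator through the conjunction case; once the lemma is set up with $S$ quantified universally in the induction hypothesis, each individual case is essentially a definitional unfolding. A secondary subtlety I would watch is the interaction between the validity (consistency) condition on states and the additions $A^g \mapsto z * S$ performed by normalisation: in the intended use $F$ is consistent, so normalisation yields a valid state, but the biconditional remains sound even if $F$ is contradictory because then both sides are vacuously false for every $w$.
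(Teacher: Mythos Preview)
Your proof is correct and your accumulator-strengthened lemma is exactly the right device for handling the conjunction case cleanly; each case then goes through as you describe.

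Your route differs from the paper's in two respects. First, the paper treats the two directions separately: the $(\Rightarrow)$ direction is by induction on the derivation of $w \models_z F$, while only the $(\Leftarrow)$ direction is by induction on the shape of $F$. You instead run a single structural induction on $F$ for the biconditional. Second, you make the accumulator generalisation explicit up front, whereas the paper's sketch says nothing about how the accumulator in $\tnorm{z}{\cdot}$ is discharged and simply defers to the Agda development. In practice any proof, including the paper's, must confront the fact that $\tnorm{z}{F_1 \land F_2}\,\mathit{emp}$ unfolds to $\tnorm{z}{F_2}(\tnorm{z}{F_1}\,\mathit{emp})$ with a non-empty inner argument, so your strengthening is not an alternative so much as the missing lemma that the paper's sketch elides. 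What your presentation buys is uniformity and an explicit statement of the generalisation; what the paper's split buys is a closer match to how such proofs often fall out in a proof assistant, where the derivation of $w \models_z F$ is itself inductive data and inducting on it is the path of least resistance for the forward direction.

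Your closing remark about validity of states is harmless but not needed: the definition of $w \in \langle w_S \rangle$ does not assume $S$ is valid, so the biconditional holds regardless, and in the contradictory case both sides are indeed empty.
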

\begin{proof}
$(\Rightarrow)$ is proven by induction on the derivation of $w \models_z F$.
$(\Leftarrow)$ follows by induction on the shape of $F$, cf. the attached Agda file \cite{ATH20} for the fully formalised proof.

\end{proof}

\subsection{Soundness Theorem}

We want to show that if we derive $\Gamma \vdash \{ \tnorm{z}{F} \} \strans \{ \tnorm{z}{F_1} \} \ | \ f$
using the rules given in Figure~\ref{fig:rules} then we are guaranteed that the \emph{evaluation}
of the plan $f$ on a world that satisfies $F$ produces a new world satisfying $F_1$.

To evaluate a plan we will define an \emph{evaluation function $\denote{.}$}
 that will interpret actions on worlds.
Recall that every state $S$ maps to a world $ w_s$.
Let us use notation $\delta$ for an arbitrary mapping (an \emph{action handler})
 that maps each action $\phi ; \{ P \} \strans \{ Q \} \ | \ \alpha$ to insertions and
 deletions on the world $w_S$ according to $\alpha$'s action on $S$.
 We then define the evaluation function $(\denote{f}^{\delta} \ w)$ that
\emph{evaluates} a plan $f$ in a world $w$ using an action handler $\delta$:
\begin{definition}[Evaluation Function]
\begin{align*}
  \denote{\mathit{shrink}}^{\delta} \ w &= w \\
  \denote{a}^{\delta} \ w &= \delta \ a \ w \\
  \denote{f ; f_1}^{\delta} \ w &=  \denote{f_1}^{\delta} (\denote{f}^{\delta} \ w)
\end{align*}
\end{definition}

The evaluation function has three cases. The shrink case just returns the world itself,
 as there is no computational meaning for a shrink action in evaluation. For a single
 action, evaluation applies the action handler to the world. For a complex plan, evaluation recurses to sub-plans.

 The following property of action handlers will be used in the soundness proof:
  \begin{lemma}[Action Handler Strengthening]\label{lem:Strengthening} If $( \delta \ \alpha \ w ) \in \langle w_{Q} \rangle$
  and $( \delta \ \alpha \ w ) \in \langle w_{A \mapsto z} \rangle$ then
  $( \delta \ \alpha \ w ) \in \langle w_{A \mapsto z * Q} \rangle$.
  \end{lemma}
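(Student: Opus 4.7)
The plan is to prove this by directly unfolding the definition of well-formedness and combining the two hypotheses. Let $w' = \delta\ \alpha\ w$. To show $w' \in \langle w_{A \mapsto z * Q} \rangle$, I need to verify that (i) every atom $B^g$ with $(B^g \mapsto +) \in A \mapsto z * Q$ satisfies $B^g \in w'$, and (ii) every atom $B^g$ with $(B^g \mapsto -) \in A \mapsto z * Q$ satisfies $B^g \notin w'$. Since we only work with valid states, the formula maps in $A \mapsto z * Q$ partition cleanly into the single map $A \mapsto z$ together with the formula maps inherited from $Q$.

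First, I would handle the inherited part: any formula map from $Q$ must, by the first hypothesis $w' \in \langle w_Q \rangle$, already impose the correct presence/absence condition on $w'$. This part is immediate from the definition of well-formedness applied to $Q$.

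Next, I would handle the single extra map $A \mapsto z$ by case analysis on the polarity $z$. If $z = +$, then well-formedness for $A \mapsto +$ (the second hypothesis) directly gives $A \in w'$, which is exactly what is needed. If $z = -$, then well-formedness for $A \mapsto -$ gives $A \notin w'$, again what is needed. Combining the two parts yields $w' \in \langle w_{A \mapsto z * Q} \rangle$.

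I do not expect any real obstacle here: the lemma is essentially a congruence property of well-formedness under the state-conjunction operator $*$, and the proof amounts to an unfolding of definitions. The only point requiring a little care is the implicit appeal to state validity, which guarantees that $A$ does not already occur in $Q$ and so the two hypotheses describe constraints on disjoint atoms; this is exactly the condition ensuring that our pointwise combination gives a consistent description of $w'$ with respect to $A \mapsto z * Q$.
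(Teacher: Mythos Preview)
Your proposal is correct and is the natural direct argument; the paper itself does not spell out a proof in the text but defers it to the accompanying Agda development, where the same unfolding of the well-formed-world definition is carried out. One small observation: the appeal to validity in your final paragraph is not actually needed for the argument to go through, since both hypotheses constrain the \emph{same} world $w'$ and therefore cannot impose conflicting requirements regardless of whether $A$ already occurs in $Q$.
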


We now proceed to define the notion of a well-formed handler,
that will be used to prove soundness of the PCP logic.

\begin{definition}[Well-Formed Handler]

We say that an action handler $\delta$ is \emph{well-formed} if, given:
\begin{itemize}
\item a
  context $\Gamma$ with  $\phi(\overline{x}) ; \{ P'(\overline{x}) \} \strans \{ Q(\overline{x}) \} \ | \ \alpha(\overline{x}) \ \in \ \Gamma$,
\item  a state $P$,
  such that $P(x) \st P'(\overline{x})[\sigma]$ for some ground substitution $\sigma$ and $\phi(\overline{x})[\sigma]$ normalises to $\top$,
  \item a world $w \in \langle w_P \rangle$,
\end{itemize}
$\delta$ satisfies the following property: $( \delta \ (\alpha(\overline{x})[\sigma]) \ w ) \in \langle w_{P \sqcup Q(\overline{x})[\sigma]} \rangle$.

\end{definition}

The next two theorems show that executing a well-typed plan $f$ by the evaluation function
$\denote{f}^{\delta} \ w$ is  \emph{sound}, for any well-formed handler $\delta$.

\begin{theorem}[Soundness of evaluation for normalized formulae]\label{thm:eval-soundness}
  Suppose $\Gamma \vdash \{ P \} \strans \{ Q \} \ | \ f$. Then for any
  $w \in \langle w_P \rangle$, and any
  well-formed handler $\delta$, it follows that $\denote{f}^\delta \ w \in \langle w_Q \rangle$.
\end{theorem}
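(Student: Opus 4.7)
My plan is to induct on the derivation of $\Gamma \vdash \{ P \} \strans \{ Q \} \ | \ f$ given by the rules of Figure~\ref{fig:rules}, keeping a fixed well-formed handler $\delta$ and a world $w \in \langle w_P \rangle$ throughout. Two general moves will recur: when a rule shifts along a subtyping relation, Lemma~\ref{lem:SubResp} transports worldhood; when a rule actually executes an action, well-formedness of $\delta$ combined with Lemma~\ref{lem:PostSubOverride} deposits the resulting world into the right $\langle w_{\cdot}\rangle$.

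The ApplyAction base case is almost immediate. Instantiating well-formedness of $\delta$ at $P$ itself (using reflexivity of $\st$ and the hypothesis that $\phi(\overline{x})[\sigma]$ normalises to $\top$) yields $\delta\,(\alpha(\overline{x})[\sigma])\,w \in \langle w_{P \sqcup Q(\overline{x})[\sigma]}\rangle$, which Lemmas~\ref{lem:PostSubOverride} and~\ref{lem:SubResp} collapse to $\langle w_{Q(\overline{x})[\sigma]}\rangle$. Weakening and Shrink each reduce to a single application of Lemma~\ref{lem:SubResp}, on the pre-state and on the post-state respectively, using the equation $\denote{f;\mathit{shrink}}^\delta w = \denote{f}^\delta w$ in the latter case. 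Composition chains the two inductive hypotheses, with one use of Lemma~\ref{lem:SubResp} on the intermediate world to cross the $Q \st Q'$ gap supplied by the rule.

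The main obstacle is the Frame case, because well-formedness of $\delta$ only describes behaviour in terms of the action's own post-condition $Q'(\overline{x})[\sigma]$ from $\Gamma$, which typically omits the framed atom $A$, while the inductive hypothesis is phrased in terms of the outer $Q$. My plan is to recover the $Q$-part from the IH and the $A \mapsto z$-part from well-formedness separately, then fuse them with Action Handler Strengthening (Lemma~\ref{lem:Strengthening}). Concretely, Lemmas~\ref{lem:SubAdd} and~\ref{lem:SubResp} give $w \in \langle w_P \rangle$ so that the IH yields $\delta\,\alpha\,w \in \langle w_Q \rangle$; Lemmas~\ref{lem:PreSatP} and~\ref{lem:PostSatQ} extract a domain entry with $P \st P'(\overline{x})[\sigma]$ and $Q \st Q'(\overline{x})[\sigma]$, and Lemma~\ref{lem:SubAdd} upgrades the first to $P * A \mapsto z \st P'(\overline{x})[\sigma]$, triggering well-formedness of $\delta$ to give $\delta\,\alpha\,w \in \langle w_{(P * A \mapsto z) \sqcup Q'(\overline{x})[\sigma]}\rangle$. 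Validity of the conclusion of Frame forces $A \notin Q$, whence Lemma~\ref{lem:ProofSub} pushes this to $A \notin Q'(\overline{x})[\sigma]$, Lemma~\ref{lem:OverrideMembership} places $A \mapsto z$ into the override state, and Lemma~\ref{lem:SubResp} delivers $\delta\,\alpha\,w \in \langle w_{A \mapsto z}\rangle$. A final use of Lemma~\ref{lem:Strengthening} fuses the two memberships into $\delta\,\alpha\,w \in \langle w_{Q * A \mapsto z}\rangle$, closing Frame and the induction.
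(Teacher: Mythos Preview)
Your proposal is correct and follows essentially the same route as the paper: structural induction on the derivation, with \textsc{ApplyAction} handled by well-formedness plus Lemmas~\ref{lem:PostSubOverride} and~\ref{lem:SubResp}, \textsc{Weakening}/\textsc{Shrink}/\textsc{Composition} by Lemma~\ref{lem:SubResp} and the evaluation equations, and \textsc{Frame} by splitting the goal into the $Q$-part (via the IH) and the $A\mapsto z$-part (via well-formedness, Lemmas~\ref{lem:PreSatP}, \ref{lem:PostSatQ}, \ref{lem:ProofSub}, \ref{lem:OverrideMembership}) before fusing with Lemma~\ref{lem:Strengthening}. The only minor omission is that in the \textsc{Frame} case you should also note that $\phi(\overline{x})[\sigma]$ normalises to $\top$ (otherwise the inner derivation could not exist), as this is one of the hypotheses needed to invoke well-formedness of $\delta$; the paper makes this explicit.
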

\begin{proof}
  The proof proceeds by structural induction on the typing derivation
  $\Gamma \vdash \{ P \} \strans \{ Q \} \ | \ f$.
%
%  It requires the following two lemmas:
%
%
In each of the below cases, we take $P$, $w \in \langle w_{P} \rangle$, and assume  $\Gamma \vdash \{ P \} \strans \{ Q \} |\ f$
was proven by application of a given rule in Figure~\ref{fig:rules},  and in each case  we will aim to show that
$\denote{f}^\delta \ w \in \langle w_Q \rangle$.

  \begin{base}[$\mathrm{ApplyAction}$]

  Suppose we have a proof for $\Gamma \vdash \{ P \} \strans \{ Q \} |\ f$ by
  means of the rule ApplyAction. The rules premise requires
  that some $\phi(\overline{x}) ; \{ P'(\overline{x}) \} \strans \{ Q'(\overline{x}) \} \ | \ \alpha(\overline{x}) \ \in \ \Gamma$,
  and moreover there exists $\sigma \ s.t.\  P'(\overline{x})[\sigma] \equiv P,\
  Q'(\overline{x})[\sigma] \equiv Q,\ \alpha(\overline{x})[\sigma] \equiv \ f$ and $\phi(\overline{x})[\sigma]$
  normalises to $\top$.

  Because $\delta$ is well-formed and   $w \in \langle w_P \rangle$, we have:
  $( \delta \ f \ w ) \in \langle w_{P \sqcup Q} \rangle$. We note that $P \st P'(\overline{x})[\sigma] $
  because $P \equiv P'(\overline{x})[\sigma]$ by the conditions of the rule, and $P \st P$ by reflexivity of subtyping relation.

  It remains to show that $( \delta \ f \ w ) \in \langle w_{P \sqcup Q} \rangle$
  implies that $( \delta \ f \ w ) \in \langle w_{Q} \rangle$. We know that
  $(P \sqcup Q) \st Q$ from Lemma \ref{lem:PostSubOverride} and can therefore deduce
  $( \delta \ a \ w ) \in \langle w_{Q} \rangle$ by applying Lemma
  \ref{lem:SubResp}.

\end{base}

\begin{case}[$\mathrm{Weakening}$]
Taking $P$, $w \in \langle w_{P} \rangle$ as before, we assume  $\Gamma \vdash \{ P \} \strans \{ Q \} |\ f$ was proven by applying Weakening.
By inductive hypothesis we know that
there is a proof of \ $\Gamma \vdash \{ P' \} \strans \{ Q \} |\ f$, such that $P \st P'$ and
$\denote{f}^\delta \ w' \in \langle w_Q \rangle$ if $w' \in \langle w_{P'} \rangle$ for some $w'$.
By Lemma~\ref{lem:SubResp}, we know that $w \in \langle w_P \rangle$ implies $w \in \langle w_{P'} \rangle$.
And so we have $\denote{f}^\delta \ w \in \langle w_Q \rangle$ as required.

\end{case}
\begin{case}[$\mathrm{Shrink}$]
  We now assume that
$\Gamma \vdash \{ P \} \strans \{ Q \} |\ f$ is obtained by application of Shrink, i.e. $f \equiv (f_1; shrink)$ for some $f_1$. By inductive hypothesis we know that
there is a proof of \ $\Gamma \vdash \{ P \} \strans \{ Q' \} |\ f_1$ such that $Q' \st Q$, and $\denote{f_1}^\delta \ w \in \langle w_{Q'} \rangle$ if $w \in \langle w_P \rangle$.
Because we already have  $w \in \langle w_P \rangle$ among our assumptions, we get $\denote{f_1}^\delta \ w \in \langle w_{Q'} \rangle$.
We apply Lemma~\ref{lem:SubResp} to get $\denote{f_1}^\delta \ w \in \langle w_Q \rangle$. It remains to show that $\denote{f_1;shrink}^\delta \ w \in \langle w_Q \rangle$.
By definition of the evaluation function, $\denote{f_1;shrink}^\delta \ w = \denote{f_1}^\delta \ (\denote{shrink}^\delta \ w) = \denote{f_1}^\delta \ w$, as required.

\end{case}
\begin{case}[$\mathrm{Composition}$]
 We now assume that
 $\Gamma \vdash \{ P \} \strans \{ Q \} |\ f$ by application of Composition.
 By inductive hypothesis we know that, for some $f_1$ and $f_2$ such that $f \equiv f_1;f_2$, and for some $Q'$ and $Q''$ such that $Q' \st Q''$,
 \begin{itemize}
 \item there is a proof of $\Gamma \vdash \{ P \} \strans \{ Q' \} |\ f_1$  and $\denote{f_1}^\delta \ w \in \langle w_{Q'} \rangle$ if $w \in \langle w_P \rangle$;
  \item there is a proof of $\Gamma \vdash \{ Q'' \} \strans \{ Q \} |\ f_2$  and $\denote{f_2}^\delta \ w' \in \langle w_Q \rangle$ if $w' \in \langle w_{Q''} \rangle$;
\end{itemize}
Because we already have  $w \in \langle w_P \rangle$ among our assumptions, we get $\denote{f_1}^\delta \ w \in \langle w_{Q'} \rangle$.
Next, we need to apply  Lemma~\ref{lem:SubResp} and the fact that $Q \st Q''$ to get  $\denote{f_1}^\delta \ w \in \langle w_{Q''} \rangle$.
Thus we found a suitable  $w' \equiv \denote{f_1}^\delta \ w$.
But then we get  $\denote{f_2}^\delta ({\denote{f_1}^\delta \ w}) \in \langle w_Q \rangle$.
Finally, by definition of the evaluation function, we know that
$\denote{f_1;f_2}^\delta \ w =  \denote{f_2}^{\delta} ({\denote{f_1}^{\delta} \ w})$. And so we get $\denote{f_1;f_2}^\delta \ w \in \langle w_Q \rangle$.
\end{case}
\begin{case}[$\mathrm{Frame}$]
   We now assume that
   $\Gamma \vdash \{ P \} \strans \{ Q \} |\ f$ by application of the Frame rule, that is, $f \equiv \alpha$,
   $P \equiv (P' * A \mapsto z)$, $Q \equiv (Q' * A \mapsto z)$ (for some $\alpha$, $P'$, $Q'$, $A$ and $z$), moreover $w \in \langle w_{A \mapsto z * P'} \rangle$,
   $A \notin P'$, $A \notin Q'$. By the inductive hypothesis, we know that there is a proof of \
   $\Gamma \vdash \{ P' \} \strans \{ Q' \} |\ \alpha$  and $\denote{\alpha}^\delta \ w' \in \langle w_{Q'} \rangle$ if $w' \in \langle w_{P'} \rangle$, for any $w'$.

   By  Lemma~\ref{lem:SubAdd} and the fact that $P' \st P'$, we get  $A \mapsto z * P' \st P'$. We then use Lemma~\ref{lem:SubResp}, and our assumption $w \in \langle w_{P} \rangle$ to assert that $w \in \langle w_{P'} \rangle$, and therefore we get
   $\denote{\alpha}^\delta \ w \in \langle w_{Q'} \rangle$.  It remains to show that $\denote{\alpha}^\delta \ w \in \langle w_{Q} \rangle$.

   By the definition of evaluation function, $\denote{\alpha}^\delta \ w =  \delta \ \alpha \ w $.
Lemma~\ref{lem:Strengthening} lets us combine two
  results: 1. $( \delta \ a \ w ) \in \langle w_{Q'} \rangle$ and
  2. $( \delta \  \alpha \ w ) \in \langle w_{A \mapsto z} \rangle$ to produce the goal
  $( \delta \  \alpha \ w ) \in \langle w_{A \mapsto z * Q'} \rangle$ which gives us  $( \delta \  \alpha \ w ) \in \langle w_{Q} \rangle$ and therefore
  $\denote{\alpha}^\delta \ w \in \langle w_{Q} \rangle$, as required.

  It only remains to show that $( \delta \  \alpha \ w ) \in \langle w_{A \mapsto z} \rangle$.
  To prove this, we will use the fact that $\delta$ is a well-formed handler, and consider $( \delta \  \alpha \ w )$.
  Recall that
\begin{itemize}
  \item $w \in \langle w_{A \mapsto z * P'} \rangle$, and,
  \item by inductive hypothesis, there is a derivation for   $\Gamma \vdash \{ P' \} \strans \{ Q' \} |\ \alpha$.
  Therefore, there is
  $\phi(\overline{x}) ; \{ P''(\overline{x}) \} \strans \{ Q''(\overline{x}) \} \ | \ \alpha(\overline{x}) \ \in \ \Gamma$ by Lemma~\ref{lem:PreSatP}.
\item Also by Lemma~\ref{lem:PreSatP}, we have $P' \st P''(\overline{x})[\sigma]$, for some $\sigma$.
  \item We know that  $\alpha(\overline{x})[\sigma]$ must normalise to $\top$, or there would be no derivation for  $\Gamma \vdash \{ P' \} \strans \{ Q' \} |\ \alpha$.
\end{itemize}
  Given these four  conditions, a well-formed handler must satisfy the property: $( \delta \  \alpha \ w ) \in \langle w_{(A \mapsto z * P') \sqcup Q''(\overline{x})[\sigma]} \rangle$.
  We can apply Lemma~\ref{lem:SubResp} and show that  $( \delta \  \alpha \ w ) \in \langle w_{A \mapsto z} \rangle$, if we can show that
  $(A \mapsto z * P') \sqcup Q''(\overline{x})[\sigma] \st (A \mapsto z)$.
  Using  Lemma~\ref{lem:OverrideMembership}
   we can establish that $A \mapsto z \in (A \mapsto z * P') \sqcup Q''(\overline{x})[\sigma]$ if $A \notin Q''(\overline{x})[\sigma]$.
To show  $A \notin Q''(\overline{x})[\sigma]$, we use  Lemma~\ref{lem:ProofSub}, Lemma~\ref{lem:PostSatQ} (which gives us $Q' \st Q''(\overline{x})[\sigma]$) and the assumption that  $A \notin Q'$. From  $A \mapsto z \in (A \mapsto z * P') \sqcup Q''(\overline{x})[\sigma]$ we obtain  $(A \mapsto z * P') \sqcup Q''(\overline{x})[\sigma] \st (A \mapsto z) $ by using the subtyping derivation rules.
  \end{case}
\end{proof}

\begin{theorem}[Soundness of Evaluation]\label{thm:eval-soundness2}
  Suppose $\Gamma \vdash \pnorm{F_1} \strans \pnorm{F_2} |\ f$ then for any $w$ such
  that $w \models_\mplus F_1$, and any well-formed $\delta$ it follows
  $\denote{f}^\delta \ w \models_\mplus F_2$.
\end{theorem}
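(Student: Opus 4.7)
The plan is to derive this theorem as essentially a direct corollary of the two previously established results: Theorem~\ref{thm:eval-soundness} (soundness of evaluation for normalized formulae, which operates entirely in the world/state semantics) and Theorem~\ref{thm:norm-soundness} (the iff characterisation relating the satisfaction relation $w \models_z F$ with the set $\langle w_{\tnorm{z}{F}} \rangle$ of well-formed worlds for the normalised state). The whole argument is simply to shuttle back and forth across this bridge.

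First, I would take the assumption $w \models_{\mplus} F_1$ and apply the $(\Rightarrow)$ direction of Theorem~\ref{thm:norm-soundness} to conclude $w \in \langle w_{\pnorm{F_1}} \rangle$. Next, from the hypothesis $\Gamma \vdash \pnorm{F_1} \strans \pnorm{F_2} \mid f$ together with the well-formedness of $\delta$ and the fact that $w \in \langle w_{\pnorm{F_1}} \rangle$, I would invoke Theorem~\ref{thm:eval-soundness} with $P \defeq \pnorm{F_1}$ and $Q \defeq \pnorm{F_2}$ to obtain $\denote{f}^\delta \ w \in \langle w_{\pnorm{F_2}} \rangle$. Finally, I would apply the $(\Leftarrow)$ direction of Theorem~\ref{thm:norm-soundness} to this conclusion, yielding $\denote{f}^\delta \ w \models_{\mplus} F_2$, which is exactly the goal.

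Because the real content of the soundness argument (induction over the typing derivation, case analysis on the rules of Figure~\ref{fig:rules}, invocation of the override and subtyping lemmas, and exploitation of well-formedness of $\delta$ in the Frame case) has already been absorbed into Theorem~\ref{thm:eval-soundness}, I do not expect any obstacle at this level; the proof is a short two-line composition. If anything, the only thing worth verifying carefully is that the normalisation function used inside the typing judgement (which produces states of the form $A^g \mapsto z * S$) and the normalisation function appearing in Theorem~\ref{thm:norm-soundness} are the same function, so that the two theorems compose without any coercion; this is immediate from the definitions but is the one point I would double-check when formalising the proof in Agda.
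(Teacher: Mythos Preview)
Your proposal is correct and matches the paper's own proof essentially line for line: apply the $(\Rightarrow)$ direction of Theorem~\ref{thm:norm-soundness} to obtain $w \in \langle w_{\pnorm{F_1}} \rangle$, invoke Theorem~\ref{thm:eval-soundness} to get $\denote{f}^\delta\, w \in \langle w_{\pnorm{F_2}} \rangle$, and then use the $(\Leftarrow)$ direction of Theorem~\ref{thm:norm-soundness} to conclude. There is nothing to add; your side remark about checking that the normalisation functions coincide is a reasonable sanity check but not an additional proof obligation.
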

\begin{proof}
  By assumption $w \models_\mplus F_1$ and  by Theorem~\ref{thm:norm-soundness}, we have $w \in \langle w_{\pnorm{F_1}} \rangle$.
  Then from Theorem~\ref{thm:eval-soundness},
  we have $\denote{f}^\delta \ w \in \langle w_{\pnorm{F_2}} \rangle$. Thus by Theorem~\ref{thm:norm-soundness}, we obtain $\denote{f}^\delta \ w \models_\mplus F_2$.
\end{proof}

\noindent Thus if $f$  is well-typed, we are guaranteed that the execution
of $f$ in world $w$ is correct.

\section{Implementation and Evaluation}\label{sec:implementation}
As mentioned already, the PCP logic and all lemmas and theorems presented in this paper are formalised in Agda, see~\cite{ATH20}.
This gives us assurance of the correcteness of the presented approach.
This Agda module also serves as a \emph{standard library} for verifying PDDL plans. Recall that
in Section~\ref{sec:ex}, for example, our task was to verify an exact plan $f_{\mathit{ab}}$, i.e. to derive $\Gamma_{\mathit{BW}} \vdash  P_{\mathit{ab}} \strans Q_{\mathit{ab}}   | \ f_{\mathit{ab}}$.
To do this, we need to create
an additional file that defines $\Gamma_{BW}$, $P_{\mathit{ab}}$, $Q_{\mathit{ab}}$ and $f_{\mathit{ab}}$ in Agda syntax.
 Then we need to construct a proof in Agda that this plan is indeed valid.
That is, judgements like  $\Gamma_{\mathit{BW}} \vdash  P_{\mathit{ab}} \strans Q_{\mathit{ab}}   | \ f_{\mathit{ab}}$  are not automatically type-checked by Agda,
but require manual proofs (using the rules of the PCP logic, cf. Figure~\ref{fig:rules}).

To mitigate this, we automate the following two tasks:
\begin{enumerate}
\item conversion from PDDL syntax to Agda.

  For example the $\mathit{pickup\_from\_table\ }x$ action as given in Figure~\ref{fig:pddl-blocksworld} is
translated to the following snippet of Agda code:

\begin{code}
\>[0]\AgdaFunction{Γ₁}\AgdaSpace{}%
\AgdaSymbol{(}\AgdaOperator{\AgdaInductiveConstructor{pickup\AgdaUnderscore{}from\AgdaUnderscore{}table}}\AgdaSpace{}%
\AgdaBound{x}\AgdaSymbol{)}\AgdaSpace{}%
\AgdaSymbol{=}\AgdaSpace{}%
\AgdaInductiveConstructor{[]}\AgdaSpace{}%
\AgdaOperator{\AgdaInductiveConstructor{,}}\<%
\\
\>[0][@{}l@{\AgdaIndent{0}}]%
\>[1]\AgdaSymbol{(}\AgdaInductiveConstructor{+}\AgdaSpace{}%
\AgdaOperator{\AgdaInductiveConstructor{,}}\AgdaSpace{}%
\AgdaInductiveConstructor{handempty}\AgdaSymbol{)}\AgdaSpace{}%
\AgdaOperator{\AgdaInductiveConstructor{*}}\AgdaSpace{}%
\AgdaSymbol{(}\AgdaInductiveConstructor{+}\AgdaSpace{}%
\AgdaOperator{\AgdaInductiveConstructor{,}}\AgdaSpace{}%
\AgdaInductiveConstructor{ontable}\AgdaSpace{}%
\AgdaBound{x}\AgdaSymbol{)}\AgdaSpace{}%
\AgdaOperator{\AgdaInductiveConstructor{*}}\AgdaSpace{}%
\AgdaSymbol{(}\AgdaInductiveConstructor{+}\AgdaSpace{}%
\AgdaOperator{\AgdaInductiveConstructor{,}}\AgdaSpace{}%
\AgdaInductiveConstructor{clear}\AgdaSpace{}%
\AgdaBound{x}\AgdaSymbol{)}\AgdaSpace{}%
\AgdaOperator{\AgdaInductiveConstructor{*}}\AgdaSpace{}%
\AgdaInductiveConstructor{[]}\AgdaSpace{}%
\AgdaOperator{\AgdaInductiveConstructor{,}}\<%
\\
\>[1]\AgdaSymbol{(}\AgdaInductiveConstructor{+}\AgdaSpace{}%
\AgdaOperator{\AgdaInductiveConstructor{,}}\AgdaSpace{}%
\AgdaInductiveConstructor{clear}\AgdaSpace{}%
\AgdaBound{x}\AgdaSymbol{)}\AgdaSpace{}%
\AgdaOperator{\AgdaInductiveConstructor{*}}\AgdaSpace{}%
\AgdaSymbol{(}\AgdaInductiveConstructor{-}\AgdaSpace{}%
\AgdaOperator{\AgdaInductiveConstructor{,}}\AgdaSpace{}%
\AgdaInductiveConstructor{handempty}\AgdaSymbol{)}\AgdaSpace{}%
\AgdaOperator{\AgdaInductiveConstructor{*}}\<%
\\
\>[1]\AgdaSymbol{(}\AgdaInductiveConstructor{-}\AgdaSpace{}%
\AgdaOperator{\AgdaInductiveConstructor{,}}\AgdaSpace{}%
\AgdaInductiveConstructor{ontable}\AgdaSpace{}%
\AgdaBound{x}\AgdaSymbol{)}\AgdaSpace{}%
\AgdaOperator{\AgdaInductiveConstructor{*}}\AgdaSpace{}%
\AgdaSymbol{(}\AgdaInductiveConstructor{+}\AgdaSpace{}%
\AgdaOperator{\AgdaInductiveConstructor{,}}\AgdaSpace{}%
\AgdaInductiveConstructor{holding}\AgdaSpace{}%
\AgdaBound{x}\AgdaSymbol{)}\AgdaSpace{}%
\AgdaOperator{\AgdaInductiveConstructor{*}}\AgdaSpace{}%
\AgdaInductiveConstructor{[]}\<%
\\
\end{code}

\item  PCP logic proof generation for Agda, given a PDDL plan.

Figure \ref{fig:derivation} shows an example of the Agda proof for $\Gamma_{\mathit{BW}} \vdash  P_{\mathit{ab}} \strans Q_{\mathit{ab}}   | \ f_{\mathit{ab}}$ generated automatically given the domain and planning problem specifications of  Figures~\ref{fig:pddl-blocksworld2} and~\ref{fig:pddl-blocksworld}.
\end{enumerate}

\begin{figure*}
\begin{code}
\>[0]\AgdaFunction{P}\AgdaSpace{}%
\AgdaSymbol{=}\AgdaSpace{}%
\AgdaSymbol{(}\AgdaInductiveConstructor{+}\AgdaSpace{}%
\AgdaOperator{\AgdaInductiveConstructor{,}}\AgdaSpace{}%
\AgdaSymbol{(}\AgdaInductiveConstructor{ontable}\AgdaSpace{}%
\AgdaInductiveConstructor{a}\AgdaSymbol{))}\AgdaSpace{}%
\AgdaOperator{\AgdaInductiveConstructor{*}}\AgdaSpace{}%
\AgdaSymbol{(}\AgdaInductiveConstructor{+}\AgdaSpace{}%
\AgdaOperator{\AgdaInductiveConstructor{,}}\AgdaSpace{}%
\AgdaSymbol{(}\AgdaInductiveConstructor{ontable}\AgdaSpace{}%
\AgdaInductiveConstructor{b}\AgdaSymbol{))}\AgdaSpace{}%
\AgdaOperator{\AgdaInductiveConstructor{*}}\AgdaSpace{}%
\AgdaSymbol{(}\AgdaInductiveConstructor{+}\AgdaSpace{}%
\AgdaOperator{\AgdaInductiveConstructor{,}}\AgdaSpace{}%
\AgdaSymbol{(}\AgdaInductiveConstructor{clear}\AgdaSpace{}%
\AgdaInductiveConstructor{a}\AgdaSymbol{))}\AgdaSpace{}%
\AgdaOperator{\AgdaInductiveConstructor{*}}\AgdaSpace{}%
\AgdaSymbol{(}\AgdaInductiveConstructor{+}\AgdaSpace{}%
\AgdaOperator{\AgdaInductiveConstructor{,}}\AgdaSpace{}%
\AgdaSymbol{(}\AgdaInductiveConstructor{clear}\AgdaSpace{}%
\AgdaInductiveConstructor{b}\AgdaSymbol{))}\AgdaSpace{}%
\AgdaOperator{\AgdaInductiveConstructor{*}}\AgdaSpace{}%
\AgdaSymbol{(}\AgdaInductiveConstructor{+}\AgdaSpace{}%
\AgdaOperator{\AgdaInductiveConstructor{,}}\AgdaSpace{}%
\AgdaSymbol{(}\AgdaInductiveConstructor{handempty}\AgdaSymbol{))}\AgdaSpace{}%
\AgdaOperator{\AgdaInductiveConstructor{*}}\AgdaSpace{}%
\AgdaInductiveConstructor{[]}\<%
\\
\\[\AgdaEmptyExtraSkip]%
\>[0]\AgdaFunction{Q}\AgdaSpace{}%
\AgdaSymbol{:}\AgdaSpace{}%
\AgdaFunction{State}\<%
\\
\>[0]\AgdaFunction{Q}\AgdaSpace{}%
\AgdaSymbol{=}\AgdaSpace{}%
\AgdaSymbol{(}\AgdaInductiveConstructor{+}\AgdaSpace{}%
\AgdaOperator{\AgdaInductiveConstructor{,}}\AgdaSpace{}%
\AgdaSymbol{(}\AgdaInductiveConstructor{on}\AgdaSpace{}%
\AgdaInductiveConstructor{a}\AgdaSpace{}%
\AgdaInductiveConstructor{b}\AgdaSymbol{))}\AgdaSpace{}%
\AgdaOperator{\AgdaInductiveConstructor{*}}\AgdaSpace{}%
\AgdaSymbol{(}\AgdaInductiveConstructor{+}\AgdaSpace{}%
\AgdaOperator{\AgdaInductiveConstructor{,}}\AgdaSpace{}%
\AgdaSymbol{(}\AgdaInductiveConstructor{ontable}\AgdaSpace{}%
\AgdaInductiveConstructor{b}\AgdaSymbol{))}\AgdaSpace{}%
\AgdaOperator{\AgdaInductiveConstructor{*}}\AgdaSpace{}%
\AgdaInductiveConstructor{[]}\<%
\\
\\[\AgdaEmptyExtraSkip]%
\>[0]\AgdaFunction{P'}\AgdaSpace{}%
\AgdaSymbol{:}\AgdaSpace{}%
\AgdaFunction{State}\<%
\\
\>[0]\AgdaFunction{P'}\AgdaSpace{}%
\AgdaSymbol{=}\AgdaSpace{}%
\AgdaSymbol{(}\AgdaInductiveConstructor{+}\AgdaSpace{}%
\AgdaOperator{\AgdaInductiveConstructor{,}}\AgdaSpace{}%
\AgdaInductiveConstructor{ontable}\AgdaSpace{}%
\AgdaInductiveConstructor{b}\AgdaSymbol{)}\AgdaSpace{}%
\AgdaOperator{\AgdaInductiveConstructor{*}}\AgdaSpace{}%
\AgdaSymbol{(}\AgdaInductiveConstructor{+}\AgdaSpace{}%
\AgdaOperator{\AgdaInductiveConstructor{,}}\AgdaSpace{}%
\AgdaInductiveConstructor{clear}\AgdaSpace{}%
\AgdaInductiveConstructor{b}\AgdaSymbol{)}\AgdaSpace{}%
\AgdaOperator{\AgdaInductiveConstructor{*}}\AgdaSpace{}%
\AgdaSymbol{(}\AgdaInductiveConstructor{+}\AgdaSpace{}%
\AgdaOperator{\AgdaInductiveConstructor{,}}\AgdaSpace{}%
\AgdaInductiveConstructor{handempty}\AgdaSymbol{)}\AgdaSpace{}%
\AgdaOperator{\AgdaInductiveConstructor{*}}\AgdaSpace{}%
\AgdaSymbol{(}\AgdaInductiveConstructor{+}\AgdaSpace{}%
\AgdaOperator{\AgdaInductiveConstructor{,}}\AgdaSpace{}%
\AgdaInductiveConstructor{ontable}\AgdaSpace{}%
\AgdaInductiveConstructor{a}\AgdaSymbol{)}\AgdaSpace{}%
\AgdaOperator{\AgdaInductiveConstructor{*}}\AgdaSpace{}%
\AgdaSymbol{(}\AgdaInductiveConstructor{+}\AgdaSpace{}%
\AgdaOperator{\AgdaInductiveConstructor{,}}\AgdaSpace{}%
\AgdaInductiveConstructor{clear}\AgdaSpace{}%
\AgdaInductiveConstructor{a}\AgdaSymbol{)}\AgdaSpace{}%
\AgdaOperator{\AgdaInductiveConstructor{*}}\AgdaSpace{}%
\AgdaInductiveConstructor{[]}\<%
\\
\\[\AgdaEmptyExtraSkip]%
\>[0]\AgdaFunction{plan}\AgdaSpace{}%
\AgdaSymbol{:}\AgdaSpace{}%
\AgdaDatatype{f}\<%
\\
\>[0]\AgdaFunction{plan}\AgdaSpace{}%
\AgdaSymbol{=}\AgdaSpace{}%
\AgdaSymbol{(}\AgdaInductiveConstructor{join}\AgdaSpace{}%
\AgdaSymbol{(}\AgdaInductiveConstructor{join}\AgdaSpace{}%
\AgdaSymbol{(}\AgdaInductiveConstructor{act}\AgdaSpace{}%
\AgdaSymbol{(}\AgdaOperator{\AgdaInductiveConstructor{pickup\AgdaUnderscore{}from\AgdaUnderscore{}table}}%
\>[44]\AgdaInductiveConstructor{a}\AgdaSymbol{))}\AgdaSpace{}%
\AgdaSymbol{(}\AgdaInductiveConstructor{act}\AgdaSpace{}%
\AgdaSymbol{(}\AgdaOperator{\AgdaInductiveConstructor{putdown\AgdaUnderscore{}on\AgdaUnderscore{}stack}}%
\>[72]\AgdaInductiveConstructor{a}\AgdaSpace{}%
\AgdaInductiveConstructor{b}\AgdaSymbol{)))}\AgdaSpace{}%
\AgdaInductiveConstructor{shrink}\AgdaSymbol{)}\<%
\\
\\
\>[0]\AgdaFunction{Derivation}\AgdaSpace{}%
\AgdaSymbol{:}\AgdaSpace{}%
\AgdaFunction{Γ₁}\AgdaSpace{}%
\AgdaOperator{\AgdaDatatype{,}}\AgdaSpace{}%
\AgdaFunction{P}\AgdaSpace{}%
\AgdaOperator{\AgdaInductiveConstructor{$\strans$}}\AgdaSpace{}%
\AgdaFunction{Q}\AgdaSpace{}%
\AgdaOperator{\AgdaDatatype{¦}}\AgdaSpace{}%
\AgdaFunction{plan}\<%
\\
\>[0]\AgdaFunction{Derivation}\AgdaSpace{}%
\AgdaSymbol{=}\AgdaSpace{}%
\AgdaInductiveConstructor{weakening}\AgdaSpace{}%
\AgdaFunction{P}\AgdaSpace{}%
\AgdaSymbol{(}\AgdaFunction{from-yes}\AgdaSpace{}%
\AgdaSymbol{(}\AgdaFunction{decSub}\AgdaSpace{}%
\AgdaFunction{P'}\AgdaSpace{}%
\AgdaFunction{P}\AgdaSymbol{))}\AgdaSpace{}%
\AgdaInductiveConstructor{tt}\AgdaSpace{}%
\AgdaSymbol{(}\AgdaInductiveConstructor{shrink}\AgdaSpace{}%
\AgdaFunction{Q}\AgdaSpace{}%
\AgdaInductiveConstructor{tt}\<%
\\
\>[0][@{}l@{\AgdaIndent{0}}]%
\>[8]\AgdaSymbol{(}\AgdaFunction{from-yes}\AgdaSpace{}%
\AgdaSymbol{(}\AgdaFunction{decSub}\AgdaSpace{}%
\AgdaFunction{Q}\AgdaSpace{}%
\AgdaSymbol{((}\AgdaInductiveConstructor{-}%
\>[396I]\AgdaOperator{\AgdaInductiveConstructor{,}}\AgdaSpace{}%
\AgdaInductiveConstructor{ontable}\AgdaSpace{}%
\AgdaInductiveConstructor{a}\AgdaSymbol{)}\AgdaSpace{}%
\AgdaOperator{\AgdaInductiveConstructor{*}}\AgdaSpace{}%
\AgdaSymbol{(}\AgdaInductiveConstructor{+}\AgdaSpace{}%
\AgdaOperator{\AgdaInductiveConstructor{,}}\AgdaSpace{}%
\AgdaInductiveConstructor{ontable}\AgdaSpace{}%
\AgdaInductiveConstructor{b}\AgdaSymbol{)}\AgdaSpace{}%
\AgdaOperator{\AgdaInductiveConstructor{*}}\AgdaSpace{}%
\AgdaSymbol{(}\AgdaInductiveConstructor{+}\AgdaSpace{}%
\AgdaOperator{\AgdaInductiveConstructor{,}}\AgdaSpace{}%
\AgdaInductiveConstructor{clear}\AgdaSpace{}%
\AgdaInductiveConstructor{a}\AgdaSymbol{)}\AgdaSpace{}%
\AgdaOperator{\AgdaInductiveConstructor{*}}\AgdaSpace{}%
\AgdaSymbol{(}\AgdaInductiveConstructor{-}\AgdaSpace{}%
\AgdaOperator{\AgdaInductiveConstructor{,}}\AgdaSpace{}%
\AgdaInductiveConstructor{holding}\AgdaSpace{}%
\AgdaInductiveConstructor{a}\AgdaSymbol{)}\AgdaSpace{}%
\AgdaOperator{\AgdaInductiveConstructor{*}}\<%
\\
\>[.][@{}l@{}]\<[396I]%
\>[32]\AgdaSymbol{(}\AgdaInductiveConstructor{-}\AgdaSpace{}%
\AgdaOperator{\AgdaInductiveConstructor{,}}\AgdaSpace{}%
\AgdaInductiveConstructor{clear}\AgdaSpace{}%
\AgdaInductiveConstructor{b}\AgdaSymbol{)}\AgdaSpace{}%
\AgdaOperator{\AgdaInductiveConstructor{*}}\AgdaSpace{}%
\AgdaSymbol{(}\AgdaInductiveConstructor{+}\AgdaSpace{}%
\AgdaOperator{\AgdaInductiveConstructor{,}}\AgdaSpace{}%
\AgdaInductiveConstructor{on}\AgdaSpace{}%
\AgdaInductiveConstructor{a}\AgdaSpace{}%
\AgdaInductiveConstructor{b}\AgdaSymbol{)}\AgdaSpace{}%
\AgdaOperator{\AgdaInductiveConstructor{*}}\AgdaSpace{}%
\AgdaSymbol{(}\AgdaInductiveConstructor{+}\AgdaSpace{}%
\AgdaOperator{\AgdaInductiveConstructor{,}}\AgdaSpace{}%
\AgdaInductiveConstructor{handempty}\AgdaSymbol{)}\AgdaSpace{}%
\AgdaOperator{\AgdaInductiveConstructor{*}}\AgdaSpace{}%
\AgdaInductiveConstructor{[]}\AgdaSymbol{)))}\<%
\\
\>[8]\AgdaSymbol{(}\AgdaInductiveConstructor{composition}\AgdaSpace{}%
\AgdaSymbol{(}\AgdaFunction{from-yes}\AgdaSpace{}%
\AgdaSymbol{(}\AgdaFunction{decSub}\<%
\\
\>[8][@{}l@{\AgdaIndent{0}}]%
\>[12]\AgdaSymbol{((}\AgdaInductiveConstructor{-}\AgdaSpace{}%
\AgdaOperator{\AgdaInductiveConstructor{,}}\AgdaSpace{}%
\AgdaInductiveConstructor{ontable}\AgdaSpace{}%
\AgdaInductiveConstructor{a}\AgdaSymbol{)}\AgdaSpace{}%
\AgdaOperator{\AgdaInductiveConstructor{*}}\AgdaSpace{}%
\AgdaSymbol{(}\AgdaInductiveConstructor{+}\AgdaSpace{}%
\AgdaOperator{\AgdaInductiveConstructor{,}}\AgdaSpace{}%
\AgdaInductiveConstructor{ontable}\AgdaSpace{}%
\AgdaInductiveConstructor{b}\AgdaSymbol{)}\AgdaSpace{}%
\AgdaOperator{\AgdaInductiveConstructor{*}}\AgdaSpace{}%
\AgdaSymbol{(}\AgdaInductiveConstructor{+}\AgdaSpace{}%
\AgdaOperator{\AgdaInductiveConstructor{,}}\AgdaSpace{}%
\AgdaInductiveConstructor{clear}\AgdaSpace{}%
\AgdaInductiveConstructor{a}\AgdaSymbol{)}\AgdaSpace{}%
\AgdaOperator{\AgdaInductiveConstructor{*}}\AgdaSpace{}%
\AgdaSymbol{(}\AgdaInductiveConstructor{+}\AgdaSpace{}%
\AgdaOperator{\AgdaInductiveConstructor{,}}\AgdaSpace{}%
\AgdaInductiveConstructor{holding}\AgdaSpace{}%
\AgdaInductiveConstructor{a}\AgdaSymbol{)}\AgdaSpace{}%
\AgdaOperator{\AgdaInductiveConstructor{*}}\AgdaSpace{}%
\AgdaSymbol{(}\AgdaInductiveConstructor{+}\AgdaSpace{}%
\AgdaOperator{\AgdaInductiveConstructor{,}}\AgdaSpace{}%
\AgdaInductiveConstructor{clear}\AgdaSpace{}%
\AgdaInductiveConstructor{b}\AgdaSymbol{)}\AgdaSpace{}%
\AgdaOperator{\AgdaInductiveConstructor{*}}\AgdaSpace{}%
\AgdaInductiveConstructor{[]}\AgdaSymbol{)}\<%
\\
\>[12]\AgdaSymbol{((}\AgdaInductiveConstructor{+}\AgdaSpace{}%
\AgdaOperator{\AgdaInductiveConstructor{,}}\AgdaSpace{}%
\AgdaInductiveConstructor{ontable}\AgdaSpace{}%
\AgdaInductiveConstructor{b}\AgdaSymbol{)}\AgdaSpace{}%
\AgdaOperator{\AgdaInductiveConstructor{*}}\AgdaSpace{}%
\AgdaSymbol{(}\AgdaInductiveConstructor{+}\AgdaSpace{}%
\AgdaOperator{\AgdaInductiveConstructor{,}}\AgdaSpace{}%
\AgdaInductiveConstructor{clear}\AgdaSpace{}%
\AgdaInductiveConstructor{b}\AgdaSymbol{)}\AgdaSpace{}%
\AgdaOperator{\AgdaInductiveConstructor{*}}\AgdaSpace{}%
\AgdaSymbol{(}\AgdaInductiveConstructor{+}\AgdaSpace{}%
\AgdaOperator{\AgdaInductiveConstructor{,}}\AgdaSpace{}%
\AgdaInductiveConstructor{clear}\AgdaSpace{}%
\AgdaInductiveConstructor{a}\AgdaSymbol{)}\AgdaSpace{}%
\AgdaOperator{\AgdaInductiveConstructor{*}}\AgdaSpace{}%
\AgdaSymbol{(}\AgdaInductiveConstructor{-}\AgdaSpace{}%
\AgdaOperator{\AgdaInductiveConstructor{,}}\AgdaSpace{}%
\AgdaInductiveConstructor{handempty}\AgdaSymbol{)}\AgdaSpace{}%
\AgdaOperator{\AgdaInductiveConstructor{*}}\AgdaSpace{}%
\AgdaSymbol{(}\AgdaInductiveConstructor{-}\AgdaSpace{}%
\AgdaOperator{\AgdaInductiveConstructor{,}}\AgdaSpace{}%
\AgdaInductiveConstructor{ontable}\AgdaSpace{}%
\AgdaInductiveConstructor{a}\AgdaSymbol{)}\AgdaSpace{}%
\AgdaOperator{\AgdaInductiveConstructor{*}}\AgdaSpace{}%
\AgdaSymbol{(}\AgdaInductiveConstructor{+}\AgdaSpace{}%
\AgdaOperator{\AgdaInductiveConstructor{,}}\AgdaSpace{}%
\AgdaInductiveConstructor{holding}\AgdaSpace{}%
\AgdaInductiveConstructor{a}\AgdaSymbol{)}\AgdaSpace{}%
\AgdaOperator{\AgdaInductiveConstructor{*}}\AgdaSpace{}%
\AgdaInductiveConstructor{[]}\AgdaSymbol{)))}\<%
\\
\>[0]\<%
\\
\>[8]\AgdaSymbol{((}\AgdaInductiveConstructor{frame}\AgdaSpace{}%
\AgdaInductiveConstructor{+}\AgdaSpace{}%
\AgdaSymbol{(}\AgdaInductiveConstructor{ontable}\AgdaSpace{}%
\AgdaInductiveConstructor{b}\AgdaSymbol{)}\AgdaSpace{}%
\AgdaSymbol{(λ}\AgdaSpace{}%
\AgdaBound{z}\AgdaSpace{}%
\AgdaSymbol{→}\AgdaSpace{}%
\AgdaBound{z}\AgdaSymbol{)}\AgdaSpace{}%
\AgdaSymbol{(λ}\AgdaSpace{}%
\AgdaBound{z}\AgdaSpace{}%
\AgdaSymbol{→}\AgdaSpace{}%
\AgdaBound{z}\AgdaSymbol{)}\<%
\\
\>[8][@{}l@{\AgdaIndent{0}}]%
\>[10]\AgdaSymbol{(}\AgdaInductiveConstructor{frame}\AgdaSpace{}%
\AgdaInductiveConstructor{+}\AgdaSpace{}%
\AgdaSymbol{(}\AgdaInductiveConstructor{clear}\AgdaSpace{}%
\AgdaInductiveConstructor{b}\AgdaSymbol{)}\AgdaSpace{}%
\AgdaSymbol{(λ}\AgdaSpace{}%
\AgdaBound{z}\AgdaSpace{}%
\AgdaSymbol{→}\AgdaSpace{}%
\AgdaBound{z}\AgdaSymbol{)}\AgdaSpace{}%
\AgdaSymbol{(λ}\AgdaSpace{}%
\AgdaBound{z}\AgdaSpace{}%
\AgdaSymbol{→}\AgdaSpace{}%
\AgdaBound{z}\AgdaSymbol{)}\AgdaSpace{}%
\AgdaSymbol{(}\AgdaInductiveConstructor{applyAction}\AgdaSpace{}%
\AgdaInductiveConstructor{tt}\AgdaSpace{}%
\AgdaInductiveConstructor{tt}\AgdaSpace{}%
\AgdaInductiveConstructor{tt}\AgdaSymbol{))))}\<%
\\
\>[8]\AgdaSymbol{((}\AgdaInductiveConstructor{frame}\AgdaSpace{}%
\AgdaInductiveConstructor{-}\AgdaSpace{}%
\AgdaSymbol{(}\AgdaInductiveConstructor{ontable}\AgdaSpace{}%
\AgdaInductiveConstructor{a}\AgdaSymbol{)}\AgdaSpace{}%
\AgdaSymbol{(λ}\AgdaSpace{}%
\AgdaBound{z}\AgdaSpace{}%
\AgdaSymbol{→}\AgdaSpace{}%
\AgdaBound{z}\AgdaSymbol{)}\AgdaSpace{}%
\AgdaSymbol{(λ}\AgdaSpace{}%
\AgdaBound{z}\AgdaSpace{}%
\AgdaSymbol{→}\AgdaSpace{}%
\AgdaBound{z}\AgdaSymbol{)}\<%
\\
\>[8][@{}l@{\AgdaIndent{0}}]%
\>[10]\AgdaSymbol{(}\AgdaInductiveConstructor{frame}\AgdaSpace{}%
\AgdaInductiveConstructor{+}\AgdaSpace{}%
\AgdaSymbol{(}\AgdaInductiveConstructor{ontable}\AgdaSpace{}%
\AgdaInductiveConstructor{b}\AgdaSymbol{)}\AgdaSpace{}%
\AgdaSymbol{(λ}\AgdaSpace{}%
\AgdaBound{z}\AgdaSpace{}%
\AgdaSymbol{→}\AgdaSpace{}%
\AgdaBound{z}\AgdaSymbol{)}\AgdaSpace{}%
\AgdaSymbol{(λ}\AgdaSpace{}%
\AgdaBound{z}\AgdaSpace{}%
\AgdaSymbol{→}\AgdaSpace{}%
\AgdaBound{z}\AgdaSymbol{)}\<%
\\
\>[10]\AgdaSymbol{(}\AgdaInductiveConstructor{frame}\AgdaSpace{}%
\AgdaInductiveConstructor{+}\AgdaSpace{}%
\AgdaSymbol{(}\AgdaInductiveConstructor{clear}\AgdaSpace{}%
\AgdaInductiveConstructor{a}\AgdaSymbol{)}\AgdaSpace{}%
\AgdaSymbol{(λ}\AgdaSpace{}%
\AgdaBound{z}\AgdaSpace{}%
\AgdaSymbol{→}\AgdaSpace{}%
\AgdaBound{z}\AgdaSymbol{)}\AgdaSpace{}%
\AgdaSymbol{(λ}\AgdaSpace{}%
\AgdaBound{z}\AgdaSpace{}%
\AgdaSymbol{→}\AgdaSpace{}%
\AgdaBound{z}\AgdaSymbol{)}\AgdaSpace{}%
\AgdaSymbol{(}\AgdaInductiveConstructor{applyAction}\AgdaSpace{}%
\AgdaInductiveConstructor{tt}\AgdaSpace{}%
\AgdaInductiveConstructor{tt}\AgdaSpace{}%
\AgdaInductiveConstructor{tt}\AgdaSymbol{))))))}\<%
\\
\end{code}
\caption{Agda typing derivation for BlocksWorld problem and domain given
in Figures~\ref{fig:pddl-blocksworld2} and~\ref{fig:pddl-blocksworld}.}
\label{fig:derivation}
\end{figure*}

For the first task, we
translate a given planning problem and domain to a single Agda file. Conversion of \emph{objects} is
one-to-one, i.e. the list of \emph{objects} is given as a list of constructors for the datatype $C$ that stores constants.
To convert states, we change the list syntax from Lisp style to
Agda style and add the relevant polarity. \emph{Predicates} and \emph{actions} are translated
to Agda by representing them as functions from constants to the relevant type.
For example the predicate $(\mathit{on}\ ?x\ ?y)$ is translated to
$\mathit{on}: \ C \rightarrow C \rightarrow R$.
Action descriptions are described by a
parametrised \emph{precondition} and \emph{effect} list in PDDL as shown in Figure \ref{fig:pddl-blocksworld} .
The PDDL \emph{precondition} list contains \emph{constraints} and formulas which are separate
in our context so we separate them when translating to Agda. Preconditions,
constraints and effects are then mapped one-to-one into a context description.

For the second task, we
 implemented a solver for generating Agda type derivations,
given a plan. A high level overview of the solver algorithm is shown in Figure \ref{fig:automation}.

We thus obtain a parser and a proof generator (implemented in Lisp\footnote{Both PDDL and Emacs are written in Lisp, which determined our choice for using Lisp here.})
that can process plans given in PDDL. However, we delegate the proof-checking
(as type-checking) to Agda. This latter step ultimately ensures fully formal plan verification.
We call the resulting tool \emph{plan verifier}.
The actual implementation \cite{ATH20} contains further instructions and examples.

\subsection{Evaluation of the Library Performance}

Table~\ref{tab:eval} shows the results of evaluating the plan verifier
over a few benchmark PDDL domains: BlocksWorld, Logistics, Satellite
and Mprime\cite{planningbenchmarks}. All domains use the STRIPS
requirement with Mprime also requiring equality and negative preconditions. All of
these examples are generated automatically by supplying a plan and the PDDL domain description to the plan verifier. %All Agda code is generated automatically.

\begin{table}[t]
\centering
 \begin{tabular}{|p{1.7cm}| p{1.7cm} p{1.7cm} p{1.7cm}|}
 \hline
PDDL Domain & Plan Length (number of actions) & Proof Generation Time (seconds)   & Typechecking Time (seconds) \\
    %    &             &            &  \\ [0.5ex]
 \hline\hline
 Blocksworld & 10 & 0.03 & 10.33 \\
 \hline
 Logistics & 24 & 0.07 & 28.86 \\
 \hline
 Satellite & 9 & 0.03 & 15.66 \\
 \hline
 Mprime & 11 & 0.09 & 42.02 \\
 \hline
\end{tabular}
\caption{Evaluation of the plan verifier. All tests were performed on an Intel Core i5-4670K processor with 8GB of RAM.}\label{tab:eval}
\end{table}

This evaluation shows that our system scales from BlocksWorld to more complicated
domains, even with increasing plan length.
Firstly, it helps to off-load time-consuming plan search to STRIPS. Transforming plans into Agda proofs does not take long (cf. middle column of Table~\ref{tab:eval}).
Type-checking time may look worrying, however there is plenty of room for improving it.
The given type-checking times reflect the fact that our Lisp script generates excessively long  Agda proofs.
This happens because we frame all formulas in when we generate the proofs (see Figure~\ref{fig:automation}).
For example, for the Logistics domain, we have a PDDL plan of length 24. For it, we have PCP/Agda proof with nearly 900 rule applications.
Yet, looking closer, we can see that they are mostly frame rules (838 Frame rules, 23 Compositions, 24 ApplyAction, 1 Weakening, 1 shrink).

Similarly, for Mprime example, we have a PDDL plan of length 11, but nearly 900 rule applications in PCP/Agda. Once again, most of them are applications
of the frame rule: (865 frame rules, 10 compositions, 11 ApplyAction, 1 weak and 1 shrink).

Ignoring the redundant applications of the frame rule, we see linear dependency of the PCP/Agda proof size relative to STRIPS plan size.
Thus, we believe that the type checking time shown in Table~\ref{tab:eval} does not point to limitations of a type-based approach, but merely reflects the
inefficiency of the Lisp script that generates Agda proofs. We will address these shortcomings in future work.

\subsection{Leveraging the Power of Dependent Types}\label{sec:dependent}

Agda is of course also a dependently-typed language. And, as we mentioned in the introduction,
the benefit of this approach is the ability to use plans as functions (using the action handlers).
One benefit of this would be easy extensions to
practical scenarios in which dependent types impose further restrictions and checks on action handlers.
Action handlers currently have
a type $\mathit{Action} \rightarrow \mathit{World} \rightarrow \mathit{World}$. Within a dependently-typed setting, it is easy
to extend this type with say
an energy constraint that limits the number of actions
that can be taken. Assume a scenario when a robot is given a certain amount of energy, or ``fuel'', and
each action execution consumes one unit of energy; the robot may not consume more energy than given.
It only takes a few lines of code to add this information to our current implementation:

\begin{example}[Action Handler Energy Consumption]

\begin{code}
\\
\\[\AgdaEmptyExtraSkip]%
\>[0]\AgdaKeyword{data}\AgdaSpace{}%
\AgdaDatatype{Energy}\AgdaSpace{}%
\AgdaSymbol{:}\AgdaSpace{}%
\AgdaDatatype{Nat}\AgdaSpace{}%
\AgdaSymbol{->}\AgdaSpace{}%
\AgdaPrimitiveType{Set}\AgdaSpace{}%
\AgdaKeyword{where}\<%
\\
\>[0][@{}l@{\AgdaIndent{0}}]%
\>[2]\AgdaInductiveConstructor{en}\AgdaSpace{}%
\AgdaSymbol{:}\AgdaSpace{}%
\AgdaSymbol{(}\AgdaBound{n}\AgdaSpace{}%
\AgdaSymbol{:}\AgdaSpace{}%
\AgdaDatatype{Nat}\AgdaSymbol{)}\AgdaSpace{}%
\AgdaSymbol{->}\AgdaSpace{}%
\AgdaDatatype{Energy}\AgdaSpace{}%
\AgdaBound{n}\<%
\\
\\[\AgdaEmptyExtraSkip]%
\>[0]\AgdaFunction{EnergyValue}\AgdaSpace{}%
\AgdaSymbol{:}\AgdaSpace{}%
\AgdaSymbol{∀}\AgdaSpace{}%
\AgdaSymbol{\{}\AgdaBound{n}\AgdaSymbol{\}}\AgdaSpace{}%
\AgdaSymbol{->}\AgdaSpace{}%
\AgdaDatatype{Energy}\AgdaSpace{}%
\AgdaBound{n}\AgdaSpace{}%
\AgdaSymbol{->}\AgdaSpace{}%
\AgdaDatatype{Nat}\<%
\\
\>[0]\AgdaFunction{EnergyValue}\AgdaSpace{}%
\AgdaSymbol{\{}\AgdaBound{n}\AgdaSymbol{\}}\AgdaSpace{}%
\AgdaBound{x}\AgdaSpace{}%
\AgdaSymbol{=}\AgdaSpace{}%
\AgdaBound{n}\<%
\\
\\[\AgdaEmptyExtraSkip]%
\>[0]\AgdaFunction{actionHandler}\AgdaSpace{}%
\AgdaSymbol{:}\AgdaSpace{}%
\AgdaPrimitiveType{Set}\<%
\\
\>[0]\AgdaFunction{actionHandler}\AgdaSpace{}%
\AgdaSymbol{=}\AgdaSpace{}%
\AgdaSymbol{∀}\AgdaSpace{}%
\AgdaSymbol{\{}\AgdaBound{n}\AgdaSymbol{\}}\AgdaSpace{}%
\AgdaSymbol{->}%
\>[26]\AgdaBound{Action}%
\>[4741I]\AgdaSymbol{->}\AgdaSpace{}%
\AgdaFunction{World}\AgdaSpace{}%
\AgdaOperator{\AgdaFunction{×}}\AgdaSpace{}%
\AgdaDatatype{Energy}\AgdaSpace{}%
\AgdaSymbol{(}\AgdaInductiveConstructor{suc}\AgdaSpace{}%
\AgdaBound{n}\AgdaSymbol{)}\<%
\\
\>[.][@{}l@{}]\<[4741I]%
\>[33]\AgdaSymbol{->}\AgdaSpace{}%
\AgdaFunction{World}\AgdaSpace{}%
\AgdaOperator{\AgdaFunction{×}}\AgdaSpace{}%
\AgdaDatatype{Energy}\AgdaSpace{}%
\AgdaBound{n}\<%
\\
\>[0]\<%
\end{code}
\end{example}

Now our implementation incorporates constraints on energy consumption, and the action execution will be bound by the amount of the given energy.
This is a really powerful way to use dependent types as it improves readability,
and also provides endless possibilities for incorporating various computational constraints in the plan execution.

\begin{figure}

\begin{tcolorbox}
Given a PDDL domain $D$, a PDDL problem description $P_D$, and a plan $f_1$,
the Agda proof script for $f_1$ in PCP logic is generated as follows:
\begin{enumerate}
\item Parse $D$, $P_D$ into Lisp syntax:
\begin{enumerate}
\item Store the objects in $D$, initial and goal world in $P_D$ as variables in Lisp. The initial world id stored in a variable $w$ representing the current world.
\item Convert actions from $D$ into parametrised Lisp functions that generate preconditions and post-conditions.
\end{enumerate}
\item For all actions in the plan $f_1$:
\begin{enumerate}
\item Store $w$ in a backup variable so that the Agda subtyping relations can be generated.
\item Use the relevant Lisp function (as defined in (1.b)) to generate the preconditions and postconditions of the current action.
\item Generate the frame axioms by comparing the preconditions of the action to $w$
where all formula maps in $w$ that are not in the preconditions are framed in.
\item Use Lisp functions to apply the action to the world and store the result in the world variable.
\end{enumerate}
\item Use all stored results to generate and write Agda proof to file:
\begin{enumerate}
\item Start derivation with the Weakening rule to allow for the reordering of the initial state.
\item Use the Shrink rule over the rest of the derivation to shrink the result to
the goal state.
\item The rest of the derivation proceeds by composing all actions in the plan $f_1$
with the relevant subtyping relations and frame axioms.
\end{enumerate}
\item Typecheck the generated Agda file to confirm the validity of the proof.
\end{enumerate}
\end{tcolorbox}

\caption{Overview of the code that automatically generates PCP proofs in Agda given PDDL Domain and plan. The code is given in~\cite{ATH20}.}
\label{fig:automation}

\end{figure}

\subsection{Extraction of Plans  to Executable Code}

We can go one step further, and use Agda's code extraction library and compile our verified plans
into executable Haskell programs or executable
byte code. The latter may be deployed directly in robots.
The process is fully automated by existing Agda libraries, and subsequent execution of the byte code takes just seconds.
For example,  we have
compiled the BlocksWorld, Logistics and Satellite examples into byte code where
all examples run
in just 0.02 seconds. We refer the reader to~\cite{ATH20} for further details.

\subsection{Lessons Learnt: Effects and States}\label{sec:stateorder}

As it often happens with verification projects, this work helps to uncover some
previously unknown or unnoticed  properties of PDDL.
We will give two examples here.

As seen in Figure~\ref{fig:pddl-blocksworld}, the syntax of PDDL defines actions by ``preconditions''
and ``effects''. The PCP logic formalises both as states. Yet, there is a subtle difference between an effect and a state.
Recall that an effect is executed by
deleting all false formula maps from a state and adding all true formula maps.
To convert an effect to a state,
we must keep the list of all unaffected formula maps intact.
Also, as we have shown, the states come with the notion of ordering, but effects do not.
These simple  observations have surprisingly powerful consequences.

\textbf{Example 1. Ordering and consistency}.
 Take the action $move$ from the Logistic domain:

\begin{tabular}{c}

$  [] ;

\begin{Bmatrix}
\mathit{isVehicle}\ v \mapsto +  \\
*\ \mathit{isLocation} \ loc1 \mapsto +  \\
*\ \mathit{isLocation} \ loc2 \mapsto +  \\
*\ \mathit{isAt}\ v\ loc1 \mapsto +
\end{Bmatrix}

\strans

\begin{Bmatrix}
\mathit{isVehicl}e\ v \mapsto +  \\
*\ \mathit{isLocation} \ loc1 \mapsto +  \\
*\ \mathit{isLocation} \ loc2 \mapsto +  \\
*\ \mathit{isAt}\ v\ loc1 \mapsto - \\
*\ \mathit{isAt}\ v\ loc2 \mapsto +
\end{Bmatrix}$

\end{tabular}

Imagine we instantiated the $\mathit{move}$ action with $(\mathit{car}\ \mathit{museum}\ \mathit{museum})$.
In the PCP logic,
this instantiation  will produce an inconsistent state:

\noindent
\begin{tabular}{c}

$
\begin{Bmatrix}
\mathit{isVehicle\ car} \mapsto +  \\
*\ \mathit{isLocation \ museum} \mapsto +  \\
*\ \mathit{isLocation \ museum} \mapsto +  \\
*\ \mathit{isAt\ car\ museum} \mapsto +
\end{Bmatrix}

\strans

\begin{Bmatrix}
\mathit{isVehicle\ car} \mapsto +  \\
*\ \mathit{isLocation \ museum} \mapsto +  \\
*\ \mathit{isLocation \ museum} \mapsto +  \\
*\ \mathit{isAt\ car\ museum} \mapsto - \\
*\ \mathit{isAt\ car\ museum} \mapsto +
\end{Bmatrix}
$

\end{tabular}

\hfill

In PDDL the effect will simply be executed.
The result of this action
will depend on the order in which the effect formulas are executed.
And, since PDDL specifications~\cite{fox2003pddl2} do not specify
any particular ordering on effect formulas,
 planners have to
make this decision themselves.
So, some planners come to the conclusion that the car is at the museum, and some -- that it is not.

In the PCP logic, this plan will simply not be type-checked and the user will receive a due type checking error.

\textbf{Example 2. Loss in Translation} \label{loss}
In our early experiments, we encountered a problem that many good plans are not type-checked  when they are translated verbatim to the PCP logic.
The reason for this is the loss of information between the ``precondition'' and the ``effect'' in the PDDL formulation.
We use the following example to explain the problem.

Consider  the $ \mathit{pickup\_from\_stack}$ action
 from the BlockWorld domain definition:
\hfill\break
\begin{tabular}{cc}

$
[] ;

\begin{Bmatrix}
\mathit{on} \ x_{1} \ {x_{2}} \mapsto + \\
*\ \mathit{clear} \ x_{1} \mapsto +    \\
*\ \mathit{handEmpty} \mapsto +    \\
\end{Bmatrix}

\strans

\begin{Bmatrix}
\mathit{on} \ x_{1} \ {x_{2}} \mapsto - \\
*\ \mathit{handEmpty} \mapsto - \\
*\ \mathit{holding} \ x_{1} \mapsto + \\
*\ \mathit{clear} \ x_{2} \mapsto +
\end{Bmatrix}$

\end{tabular}

\hfill\break

\noindent Notice that $\mathit{clear} \ x_{1} \mapsto +$ is not mentioned in the effect list, because this fact is unaffected by the action. But if we treat this as a state,
rather than effect, the information about $\mathit{clear} \ x_{1} \mapsto +$ will simply be lost.
In the PCP logic, the frame rule can not be used to recover this information, as this formula already occurs in the precondition.
As a result, some PDDL plans will fail to type check in the PCP logic.
To fix this problem, we
add all such formula maps explicitly to the postconditions:

\begin{tabular}{cc}

$[] ;

\begin{Bmatrix}
\mathit{on} \ x_{1} \ {x_{2}} \mapsto + \\
*\ \mathit{clear} \ x_{1} \mapsto +    \\
*\ \mathit{handEmpty} \mapsto +    \\
\end{Bmatrix}

\strans

\begin{Bmatrix}
\mathit{on} \ x_{1} \ {x_{2}} \mapsto - \\
*\ \mathit{handEmpty} \mapsto - \\
*\ \mathit{holding} \ x_{1} \mapsto + \\
*\ \mathit{clear} \ x_{2} \mapsto + \\
*\ \highlightg{\mathit{clear} \ x_{1} \mapsto +}
\end{Bmatrix} $

\end{tabular}

The plan verifier we implement does this transformation automatically.

\section{Conclusions, Related and Future Work}\label{sec:RW}

We have presented the PCP logic,  a novel resource logic for verification of AI plans, and proven its soundness
relative to the possible world semantics of PDDL. We have shown the benefits that resource semantics
and the Curry-Howard correspondence bring to this framework. In particular, the former makes it easier
to formalise state consistency and other constraints within the logic,
and the latter enables direct deployment of verified plans as functions.
We also presented an Agda library in which the soundness result is proven, and
which simultaneously serves as a generic module for verifying plans produced by AI planning.
To further strengthen the practical significance of these results, we implemented
scripts for automated parsing of PDDL plans, and for automated generation of proofs of their soundness in the PCP logic.
The ultimate proof- (and type-) checking of these is delegated to the Agda library.
We evaluated this implementation on several famous PDDL benchmarks.

\textbf{Our Earlier Work on Proof Carrying Plans.}
Compared to our earlier attempts to define a ``proof-carrying plans'' approach~\cite{schwaab2019proof},
this new attempt is stronger both theoretically and practically.
The new PCP logic takes inspiration from resource logics
as a consequence provides a more
natural way to perform local reasoning. This, in its turn, helps to verify not just the plans,
but also consistency of domains and states. In previous work the
consistency assumption was needed to be stated as an axiom in order to prove soundness
of the system, and was not incorporated into checking of individual plans.
The PCP Logic
embeds consistency directly into the system through its rules.
This has two advantages. The first is that it is impossible to derive proofs
that contain inconsistent states and the second is that type errors for inconsistency
will show exactly where and why there is an inconsistent state. The PCP logic also
enables extensions to first-order logic, introduction of richer verification
constraints, and opens the possibilities for extensions to concurrent logics.
Though the latter extension is left as future work.

From the practical point of view, the earlier work contained no automation presented here.
Also, it did not include reasoning with constraints, or any experiments with using
the dependent types during the plan execution.

\textbf{Origins of the Frame Rule.}
The ``frame problem'' that inspired the frame rule of Separation logic
actually
has origins in AI~\cite{hayes1981frame,dennett2006cognitive}.
Initially, the problem referred to the difficulty in local reasoning about problems in a complex
world. In AI planning specifically, this problem consisted of keeping track
of the consequences of applying an action on a world. Intuitively a person would
understand picking up some block $a$ that is on the table would have no effect
on some other block $b$ that is on the table. The frame problem deals with the
way to represent this intuition formally.

One way to deal with the frame problem is to declare ``frame axioms'' for every action
explicitly. This is an inefficient way to deal with this
problem as defining these frame axioms becomes infeasible the larger the system
gets~\cite{dennett2006cognitive}. Since most actions in AI planning only make small
local changes to the world, a more general representation would be more suitable.
STRIPS deals with this problem by introducing an assumption that every formula
in a world that is not mentioned in the effect list of an action remains the same
after execution of the action. This is known as the ``STRIPS assumption'' and
it is an assumption that PDDL also uses.

The logic of Bunched Implications \cite{o2001local,ishtiaq2001bi} and Separation Logic~\cite{o2007resources}
took inspiration from this older notion of the frame problem, and introduced more abstract formalism,
which is now known as a ``frame rule'', into the resource logics~\cite{pym2019resource}.
This family of logics has brought many theoretical and practical advances to modelling of complex systems, and is behind many \emph{lightweight verification}
projects~\cite{calcagno2015moving}.

In this paper, we have shown how the original frame problem from AI maps back to the more
abstract ideas of resource logics.
We see this as one of the paper's contributions.

\textbf{Curry-Howard Approaches to Separation Logic and Other Resource Logics.}
The PCP Logic introduces a Curry-Howard approach to AI planning inspired by resource
logic. This is in part inspired by existing applications of the  Curry-Howard approach in the field. Both Hoare logic and Separation logic
have been given a Curry-Howard interpretation: \cite{nanevskiseparation,nanevski2006polymorphism}.
Several papers explore the computational and practical benefits of it.
For example, Polikarpova and Sergey \cite{polikarpova2019structuring}
took a Curry-Howard approach to Separation logic to improve program synthesis seen as a proof
search problem. In a similar way to our specifications, they define a
synthesis goal $\Gamma \vdash P \strans Q$, which is solved by a program $c$ if
the assertion $\Gamma \vdash  P \strans Q | c$ can be derived in their system.

In this paper, we also make an attempt to make a case for computational and practical uses of Curry-Howard interpretation of the
newly introduced PCP logic.

\textbf{AI Planning and Linear Logic.}
There is a long history of modelling AI planning in Linear logic, that dates back to the 90s~\cite{jacopin1993classical}, and was investigated in detail in the 2000s, see e.g.
\cite{chrpa2007encoding,steedman2002plans}.
In fact, AI planning is used as one of the iconic use-cases of Linear logic~\cite{polakow2001ordered}.
The main idea behind using Linear logic for AI planning is
treating action descriptions  as linear implications:
$$\alpha : \forall x. P \multimap Q,$$
where  $P$ and $Q$ are given by tensor products of atoms:
$R_1(t_1) \otimes \ldots \otimes R_n(t_n)$.
We could incorporate information about polarities inside the predicates, as follows:
$R_1(t_1,z_1) \otimes \ldots \otimes R_n(t_n,z_n)$.
Then, the linear implication and the tensor products model the resource semantics of PDDL rather elegantly.

The computational (Curry-Howard) interpretation of AI plans was not the focus of study in the above mentioned approaches,
yet it plays a crucial r\^{o}le in the PCP logic, from design all the way to implementation, verification and proof extraction (see Section~\ref{sec:implementation}).

\textbf{AI Planning and (Linear) Logic Programming.}
The above syntax also resembles linear logic programming Lolli, introduced by Miller et al~\cite{HodasM94}.
Lolli was applied in speech planning in~\cite{DixonST09}.

Our previous work~\cite{schwaab2019proof} in fact takes inspiration from Curry-Howard interpretation of Prolog\cite{0001KSP16,FuK17}.
In our previous work and in general, logic programming does not work well with PDDL negation. In PDDL, we have to work with essentially three-valued logic:
an atom may be declared to be absent or present in a world. But if neither is declared, we assume a ``not known'' or ``either'' situation.
Logic programming usually uses the approach of ``negation-as-failure'' that does not agree with this three-valued semantics.
A solution is to introduce polarities as terms, as shown in the example above. This merits further investigation.

\textbf{Curry-Howard view on Linear Logic.}
Curry-Howard semantics of Linear logic also attracted attention of logicians first in the 90s~\cite{AlbrechtCJ97}, and then in the 2000s
in connection with research into Linear Logical Frameworks~\cite{Schack-NielsenS08,CervesatoP02}.

We conjecture that many results obtained in this paper could be replicated in one of these systems.
We plan to investigate this approach in comparison with the PCP logic in the future.
Generally speaking, the PCP logic can be seen as a domain specific language for AI planning. It is simpler and
less expressive than Linear logic but makes up for it in simplicity and close correspondence
to PDDL syntax. Transformations between PDDL domain and problem descriptions to the PCP logic
are straightforward since the syntax is so similar.
This enables us to automate the generation of Agda proofs from PDDL plans.
Notably, we have typing rules for functions
that are given directly by PDDL plans.
Thus, we verify outputs of PDDL planners as given.
This close correspondence to the plans would be impossible in either of the above Curry-Howard versions of Linear logic, where proof terms tend to be much more complex.
Pros and cons of domain specific versus general approaches to verification of AI plans deserves further investigation.

We hope that the DSL nature of the PCP logic will pave the way for its wider adoption as a \emph{practical verification tool}  for the AI planning
community. This is something that previously proposed Linear logic approaches to AI planning did not achieve.

\textbf{Modelling looping behaviour and non-termination in AI planning.}
The design of this Agda prototype has revealed several limitations in state-of-the-art implementations of planning languages: e.g. their
reliance on the closed word assumption and formulae grounding and the absence of functions. We see the potential of our method to overcome many of these limitations thanks to our general dependently-typed set-up, in which the use of functions, higher-order features, constraints and effect handling will be much more natural than in the current implementations.

\textbf{Other Future Work.}
One limitation of the PCP logic is that it only works with a subset of the domains
that can be expressed in PDDL. To incorporate more of the PDDL syntax
we want to extend the system to reason about temporal (as well as concurrent) planning.
We believe that this extension can be naturally expressed
in our system due to related extensions in the resource logics.

From the theoretical point of view,
we hope to achieve  a deeper understanding of the relation of the new PCP logic to the categorical and coalgebraic semantics of
other resource logics~\cite{pym2019resource}.

We plan to improve the performance of our system, to speed up type checking, and make
Agda proof generation more reliable and practical.
The former can be improved through the creation of a frame minimising algorithm. The latter
can be facilitated
by producing partial Agda proofs when the full proof generation is too hard.

Interactive facilities of our tool also deserve future attention. Generally, Agda allows \emph{holes} to be left in
a proof which a user can use to interactively inspect the subgoal of the proof. In the future
we plan to update our proof generator to generate incomplete proofs so a user
can inspect the proof goals that cannot be solved.

Another possibility is to further explore the dependently-typed aspects of our system %to contain reasoning that lends itself
as described in Section~\ref{sec:dependent}. This can include
extensions such as higher-order functions and universal formulae.

%\section{Acknowledgements}

\begin{acks}
We thank Simon Docherty, James McKinna and David Pym, for inspiring discussions, constructive criticisms and pointers to related work.

We thank EPSRC DTA PhD Scheme  for funding the first author.

The second author acknowledges support of the UK Nationaly Cyber Security Center grant \emph{SecCon-NN: Neural Networks with Security Contracts - towards lightweight, modular security for neural networks}
and the UK Research Institute in Verified Trustworthy Software Systems (VETSS)-funded research project \emph{CONVENER: Continuous Verification of Neural Networks}.
\end{acks}

\bibliographystyle{ACM-Reference-Format}
\bibliography{mybib}

\pagebreak

\end{document}